\title{\bf Equational Bit-Vector Solving via Strong Gr\"{o}bner Bases}
\author[1,2]{Jiaxin Song \thanks{Email: jsongbk@ust.hk}}
\author[1]{Hongfei Fu \thanks{Email: jt002845@sjtu.edu.cn} \thanks{Hongfei is the corresponding author.}}
\author[2]{Charles Zhang \thanks{Email: charlesz@cse.ust.hk}}
\affil[1]{\it Shanghai Jiao Tong University, China}
\affil[2]{\it Hong Kong University of Science and Technology, China}
\date{}
\begin{document}
\maketitle

\begin{abstract}
Bit-vectors, which are integers in a finite number of bits, are ubiquitous in software and hardware systems. In this work, we consider the satisfiability modulo theories (SMT) of bit-vectors. 
Unlike normal integers, the arithmetics of bit-vectors are modular upon integer overflow. Therefore, the SMT solving of bit-vectors needs to resolve the underlying modular arithmetics.  
In the literature, two prominent approaches for SMT solving are bit-blasting (that transforms the SMT problem into boolean satisfiability) and integer solving (that transforms the SMT problem into integer properties).
Both approaches ignore the algebraic properties of the modular arithmetics 
and hence could not utilize these properties to improve the efficiency of SMT solving. 

In this work, we consider the equational theory of bit-vectors and capture the algebraic properties behind them via strong Gr\"{o}bner bases.
First, we apply strong Gr\"{o}bner bases to the quantifier-free equational theory of bit-vectors and propose a novel algorithmic improvement in the key computation of multiplicative inverse modulo a power of two. 
Second, we resolve the important case of invariant generation in quantified equational bit-vector properties via strong Gr\"{o}bner bases and linear congruence solving. 
Experimental results over an extensive range of benchmarks show that our approach outperforms existing methods in both time efficiency and memory consumption. 
\end{abstract}

\section{Introduction}
In software and hardware systems, integers are often represented by a finite number of bits, resulting in \emph{bit-vectors} (or \emph{machine integers}) that take values from a bounded range. 
Unlike normal integer arithmetics, the integer overflow in bit-vectors is often handled via modular arithmetics. 
This causes a significant problem in verifying program correctness with bit-vectors. 
For example, the conditional branch 
\[
\texttt{if (x > 0 \&\& y > 0)\, assert(x + y > 0);}
\]
is unsafe since the value of \texttt{\small x + y} may overflow, and simply treating the variables 
\texttt{\small x} and \texttt{\small y} as unbounded integers would produce incorrect results. 
Thus, integer overflow imposes a challenge to verify the correctness of bit-vector programs.

As bit-vectors are rudimentary, the correctness of software and hardware systems heavily relies on the correctness of bit-vector operations. 
Hence, verification of bit-vector properties has received significant attention in the literatures~\cite{DBLP:journals/fmsd/WintersteigerHM13,DBLP:conf/fmcad/Griggio11,DBLP:journals/toplas/ElderLSAR14,DBLP:journals/toplas/ChenDKSW18}. 
An important subject in the verification of bit-vectors is their satisfiability modulo theories (SMT)~\cite{DBLP:journals/jar/BeyerDW18,DBLP:series/txtcs/KroeningS16} that aim to solve the satisfiability of formulas over bit-vectors. 
In the SMT of bit-vector theory, the formulas of concern usually include standard operations such as addition, multiplication, bitwise-or/and, division, signed/unsigned comparison, etc.

By distinguishing whether the overflow of bit-vectors is discarded immediately or recorded by extra bits, bit-vectors can be of either \emph{fixed} or \emph{flexible} size. 
Fixed-size bit-vectors (see~\cite{smtlibv26} and \cite[Chapter 6]{DBLP:series/txtcs/KroeningS16}) completely throw away the overflow bit, and hence the arithmetics are exactly the modular arithmetics. 
Flexible-size bit-vectors~\cite{DBLP:conf/sat/FuhsGMSTZ07,DBLP:conf/lpar/ZanklM10,DBLP:conf/issta/JiaH00MZ23} record the overflow by specialized bits, and hence achieve varying size for a bit-vector. 

In the literature, there are two prominent approaches to solving the SMT of bit-vectors. 
The first approach is often called \emph{bit-blasting} \cite[Chapter 6]{DBLP:series/txtcs/KroeningS16} that transforms a bit-vector equivalently into the collection of bits in the bit-vector and solves the SMT via boolean satisfiability. 
Bit-blasting has the drawback that it completely breaks the algebraic structure behind bit-vector arithmetics and hence cannot utilize the algebraic properties to improve SMT solving. 
To resolve this drawback, the second approach~\cite{DBLP:conf/fmcad/Griggio11,DBLP:conf/vmcai/Jovanovic17,DBLP:conf/smt/Graham-Lengrand17} transforms bit-vector arithmetics into integer arithmetics and solves the original formula via integer SMT solving such as linear and polynomial integer arithmetics. 
Its main obstacle is SMT solving of the polynomial theory of integers which is highly difficult to handle.

To fully capture the algebraic structure of bit-vectors, Gr\"{o}bner bases have been applied to handle the equational theory of bit-vectors. 
Note that the classical Gr\"{o}bner basis~\cite[Chapter 1]{groebnerbasis} is limited to polynomials with coefficients from a field and cannot be applied to polynomials with coefficients from the ring $\mathbb{Z}_{2^d}$ ($d>1$) of bit-vectors, since the ring $\mathbb{Z}_{2^d}$ is a field only when $d=1$. 
To circumvent this issue, several approaches~\cite{DBLP:conf/fmcad/KaufmannBK19,JPAAgroebnerbasis,DBLP:conf/sat/SeedKE20} have considered extensions of Gr\"{o}bner bases.
The work~\cite{DBLP:conf/fmcad/KaufmannBK19} considers Gr\"{o}bner bases with coefficients from a principal ideal domain (in particular, the set of integers)~\cite[Chapter 4]{groebnerbasis}. 
The work~\cite{JPAAgroebnerbasis} establishes the general theory of Gr\"{o}bner bases over polynomials with coefficients from a commutative Noetherian ring. 
The work~\cite{DBLP:conf/sat/SeedKE20} further improves the computation of Gr\"{o}bner bases in ~\cite{JPAAgroebnerbasis} by a heuristics. 
These approaches are either too narrow (e.g., only considering principal ideal domains) or too wide (that consider general commutative Noetherian rings), resulting in excessive computations in Gr\"{o}bner bases for bit-vectors. 

In this work, we consider the SMT of fixed-size bit-vectors. We focus on the SMT solving of the theory of polynomial equations over bit-vectors, which is the basic class of SMT that considers polynomial (in)equations with modular addition and multiplication, 
and finds applications in verification of arithmetic circuits~\cite{DBLP:conf/fmcad/KaufmannBK19,DBLP:conf/cav/WienandWSKG08}. 
This work aims to develop novel SMT-solving algorithms that leverage the algebraic structure from the ring of bit-vectors to improve the efficiency of SMT solving. 
Our detailed contributions are as follows:
\vspace{-1.5mm}
\begin{itemize}
\item First, we propose a novel approach to solve the quantifier-free polynomial equational theory of bit-vectors via strong Gr\"{o}bner bases~\cite{norton2001strong}.
Strong Gr\"{o}bner bases extend Gr\"{o}bner bases to polynomials with coefficients from a principal ideal ring (PIR) and, therefore well fits bit-vectors. A key contribution here is a theorem that establishes the connection between the existence of a constant polynomial in an ideal and in a strong Gr\"{o}bner basis for the ideal. 

\item Second, we propose an algorithmic improvement for the key calculation of multiplicative inverse modulo a power of two in the computation of strong Gr\"{o}bner bases. 
As multiplicative inverse is a main factor in computing strong Gr\"{o}bner bases, the improvement substantially speeds up SMT solving.  
\item Third, we propose an invariant generation method for bit-vectors. Note that invariant generation can be encoded as a special class of constraint Horn clauses (CHC) and is an important case of quantified SMT solving~\cite{DBLP:conf/kbse/YaoKSFWR23}. 
We show that the generation of polynomial equational invariants can be solved by strong Gr\"{o}bner bases and linear congruence solving. 
\end{itemize}

We implement our approach in the \textsf{cvc5} SMT solver~\cite{barbosa2022cvc5}. Experimental results over a wide range of benchmarks show that our approach substantially outperforms existing approaches in both the number of solved instances, time efficiency, and memory consumption. 
Especially, for quantifier-free SMT, our method can solve $40\%$ more unsatisfiable instances compared to state-of-the-art approaches. 
For polynomial invariant generation, our method achieves a $20$X speedup and a $6$X reduction in memory usage.

\section{Preliminaries}
\label{sec:background}

In this section, we present basic concepts in rings, polynomials, strong Gr\"{o}bner bases, and bit-vectors. We refer to standard textbooks (e.g.,~\cite{algebra}) for a detailed treatment of rings and polynomials. 

\subsection{Rings, Polynomials and the Ring of Bit-Vectors}
Generally, a \emph{ring} is a non-empty set $R$ equipped with two binary operations $+,\cdot: R\times R\rightarrow R$ (where $+$ is the abstract addition and $\cdot$ is the abstract multiplication) and two constants $0$ and $1$, such that (i) the addition $+$ is commutative and associative, (ii) the multiplication $\cdot$ is associative and distributive over the addition, (iii) the element $0$ is the identity element for the addition, and (iv) the element $1$ is the identity element for the multiplication. Formally, a \emph{ring} is an algebraic structure $(R, +, \cdot, 0, 1)$ such that $R$ is a non-empty set,  $+,\cdot$ are functions from $R\times R$ into $R$, $0,1\in R$ are constants in $R$, and the following properties hold for all $a,b,c\in R$:

\begin{itemize}
\item $a+b=b+a$, $(a+b)+c=a+(b+c)$ and $(a\cdot b)\cdot c=a\cdot (b\cdot c)$;
\item $a\cdot (b+c)=(a\cdot b) + (a\cdot c)$ and $(b+c)\cdot a=(b\cdot a) + (c\cdot a)$;
\item $a+0=a$ and $a\cdot 1=1\cdot a=a$. 
\item there is a unique element $d\in R$ (usually denoted by $-a$) such that $a+d=0$. 
\end{itemize}
A ring $(R, +, \cdot, 0, 1)$ is a \emph{field} if the multiplication is commutative and for every $a\in R$, there exists $b\in R$ (called the \emph{multiplicative inverse} of $a$) such that $a\cdot b=b\cdot a=1$. The ring $(R, +, \cdot, 0, 1)$ is  \emph{communicative} if $a\cdot b = b\cdot a$ for all $a, b\in R$. 
In the following, we always consider commutative rings when referring to a ring and only write $R$ instead of  $(R, +, \cdot, 0, 1)$ for the sake of brevity. 
A subset of $R$ is called a \emph{subring} of $R$ if it contains $1$ and is closed under the ring operations induced from $R$.

A typical example of a finite ring is the ring $\mathbb{Z}_m$ ($m$ is a positive integer) of modular addition and multiplication w.r.t the modulus $m$. In this work, we pay special attention to the ring $\mathbb{Z}_{p^k}$ where $p$ is prime and $k$ is a positive integer. Notice that when $k>1$, $\mathbb{Z}_{p^k}$ is not a field as multiplicative inverse may not exist. When $p=2$, $\mathbb{Z}_{2^k}$ is exactly the ring of bit-vectors with size $k$. 
In the ring $\mathbb{Z}_{2^k}$, for any element $a\in \mathbb{Z}_{2^k}\setminus \{0\}$, we define $\nu_2(a)$ to be the maximum power of two that divides $a$, i.e., $a = 2^{\nu_2(a)}\cdot b$ and $2{\not\vert}\,b$ for some integer $b$.

A polynomial with coefficients from a ring is a finite summation of terms for which a term is a product between an element in the ring (as the coefficient) and the variables in the polynomial. Formally,  
let $R$ be a ring, $x_1, \ldots, x_n$ be $n$ variables and $\bm{x}:=\myangle{x_1,\ldots, x_n}$. 
A \emph{monomial} is a product of the form $\bm{x}^{\bm \alpha} := x_1^{\alpha_1}\cdot \ldots \cdot x_n^{\alpha_n}$, where $\bm{\alpha} = \myangle{\alpha_1, \ldots, \alpha_n}\in \mathbb{N}^n$ is a vector of natural numbers for which each $\alpha_i$ specifies the exponent of the variable $x_i$.
We define $\abs{\bm{\alpha}} := \alpha_1 + \ldots + \alpha_n$ as the \emph{degree} of the monomial $\bm{x}^{\bm{\alpha}}$.  
A \emph{term} $t$ is a product $t=c\cdot \bm{x}^{\bm{\alpha}}$ where $c$ is an element in $R$ (as the coefficient of the term) and $\bm{x}^{\bm{\alpha}}$ is a monomial. 
When ${\bm{\alpha}}$ is the zero vector, i.e., $\myangle{0,\dots, 0}$, then the term $c\cdot \bm{x}^{\bm{\alpha}}$ is treated as the constant element $c$.  
A \emph{polynomial} $f$ is a finite sum of terms, i.e., 
$f= \sum_{i=1}^\ell t_i$, where each $t_i$ is a term. 
The \emph{degree} of a polynomial $f=\sum_{i=1}^\ell t_i$ with each term $t_i=c_i\cdot \bm{x}^{{\bm{\alpha}}_i}$, denoted by $\mydeg{f}$, is defined as $\max_{1\le i\le \ell} \abs{{\bm{\alpha}}_i}$.
The set of all polynomials with variables $x_1,\ldots, x_n$ and coefficients from a ring $R$ is denoted by $R[\bm{x}]$ (or $R[x_1,\ldots, x_n]$).
Moreover, we denote the set of monomials and terms with variables $x_1,\ldots, x_n$ and coefficients from a ring $R$ by $M_R[\bm{x}]$ and $T_R[\bm{x}]$, respectively. 
It is straightforward to verify that the polynomials in $R[\bm{x}]$ with the intuitive addition and multiplication operations form a ring. See~\cite [Definition~1.1.3]{greuel2008singular} for details. 

Given a polynomial $f\in R[\bm{x}]$, the polynomial evaluation $f(\bm{v})$ at a vector $\bm{v}=\myangle{v_1,\dots, v_n}\in R^n$ is defined as the value calculated by substituting each $x_i$ with $v_i$ in the polynomial $f$. For a subset $F\subseteq R[\bm{x}]$, we define the \emph{variety} of $F$, denoted by $\mathcal{V}(F)$, as the set of common roots of polynomials in $F$. Formally, $\mathcal{V}(F):=\{\bm{v}\in R^{n}\mid f(\bm{v})=0\mbox{ for all }f\in F\}$. 

Let $d$ be a positive integer. 
An unsigned \emph{bit-vector} of size $d$ is a vector of $d$ bits and represents an integer in $[0, 2^d-1]$. 
The arithmetics of unsigned bit-vectors of size $d$ follow the modular addition and multiplication of the ring $\mathbb{Z}_{2^d}$. 
In this work, we call $\mathbb{Z}_{2^d}$ the ring of bit-vectors of size $d$ and focus on polynomials in $\mathbb{Z}_{2^d}[\bm{x}]$ where $x_1,\dots, x_n$ are variables that take integer values from $[0, 2^d-1]$. 

\subsection{Strong Gr\"{o}bner Bases}
Strong Gr\"{o}bner bases~\cite{norton2001strong} are an extension of classical Gr\"{o}bner bases~\cite{groebnerbasis} that deal with the membership of ideals generated by a set of polynomials over a principal ideal ring. 
To present strong Gr\"{o}bner bases, we recall ideals as follows.

\smallskip
\noindent{\em Ideals.} 
An \emph{ideal} $I$ of a ring $R$ is a subset $I\subseteq R$ such that (i) the algebraic structure $(I, +)$ forms a group (i.e., $I$ is closed under the addition and the additive inverse), and (ii) for all $a\in I$ and $r\in R$, we have that $ra\in I$. 
For a finite subset $S=\{a_1, \ldots, a_n\}$ of the ring $R$, we define the set  $\myangle{S} := \left\{r_1\cdot a_1 + \ldots + r_m\cdot a_m\ :\ r_i\in R\right\}$  as the ideal \emph{generated} by $S$.
Conversely, $S$ is called a \emph{generating set} of the ideal $\left\langle S\right\rangle$.
When $S$ is a subset of a subring $R'$ of $R$, and $r_i$ comes from $R$, we especially denote the above set by $\myangle{S: R'}_{R}$.
We abbreviate  $\myangle{S: R'}_{R}$ as  $\myangle{S}_{R}$.
By the definition of variety, it can be verified that $\mathcal{V}(S) = \mathcal{V}(\myangle{S})$ when $S\subseteq R[\bm{x}]$.
An ideal $I\subseteq R$ is \emph{principal} if $I=\myangle{a}$ for some $a\in R$. 
A fundamental problem is the ideal membership that asks to check whether an element $b\in R$ belongs to the ideal $\myangle{S}$ generated by a finite set $S\subseteq R$. 

Below we fix a ring $R$ and $n$ variables $x_1,\ldots,x_n$. 
The key ingredient in strong Gr\"{o}bner bases is a polynomial reduction operation called \emph{strong reduction}. To present strong reduction, we first introduce monomial orderings as follows.  

\smallskip
\noindent\emph{Monomial orderings}. 
A \emph{monomial ordering} $\prec$ over $M_R[\bm{x}]$ is a total ordering over $M_R[\bm{x}]$.
In this work, we consider \emph{well-ordered} monomial ordering, which means $\bm{x}^{\textbf{0}} \prec \bm{x}^{\bm{\alpha}}$ for any $\bm{\alpha}\neq 0$ and for any monomials $p, q, r$, if $p \prec q$, then $p\cdot r \prec q\cdot r$.
For example, the \emph{lexicographical ordering} (lex) orders the monomials lexicographically by $\myangle{\alpha_1, \ldots, \alpha_n}$. 
The \emph{graded-reverse lexicographical ordering} (grevlex)  is defined as the lexicographical ordering over the tuple $\myangle{\alpha_1 + \ldots + \alpha_n, \alpha_1, \ldots, \alpha_n}$.

Given a polynomial $f\in R[\bm{x}]$ and a well-ordered monomial ordering $\prec$, the \emph{leading monomial} of $f$ is the maximum monomial among all the monomials in the finite sum of terms of the polynomial $f$ under the ordering $\prec$.
We denote the leading monomial of a polynomial $f$ by $\lm(f)$.
For example, if $x_2 \prec x_1$ and $f= x_1x_2 -2x_1^2x_2$, the leading monomial of $f$ under grevlex (i.e., $\lm(f)$) is $x_1^2x_2$.
Additionally, the \emph{leading term} $\lt(f)$ and resp. \emph{leading coefficient} $\lc(f)$ of a polynomial $f$ are defined as the term containing the leading monomial and resp. the coefficient of the leading term.
Hence, $\lt(f) = -2x_1^2x_2$ and $\lc(f) = -2$.

\smallskip
\noindent\emph{Strong Gr\"obner bases}. 
Strong Gr\"{o}bner bases provide a way to check the membership of an ideal of a polynomial ring whose coefficients are from a principal ideal ring. 
Here we first show the definition of these rings.
\begin{definition}[Principal Ideal Ring \cite{greuel2008singular}] A \emph{principal ideal ring} (PIR) is a ring $R$ such that every ideal $I\subseteq R$ is principal. 
\end{definition}

Especially, for any prime number $p$, the ring $\mathbb{Z}_{p^k}$ is a PIR.
Then, we present the strong reduction used in strong Gr\"{o}bner bases as follows. 

\begin{definition}[Strong Reduction]
Let $G\neq \emptyset$ be a finite subset of $R[\bm{x}]\setminus \{0\}$ and $f,h\in R[\bm{x}]\setminus \{0\}$.
Then we say that $f$ \emph{strongly reduces} to $h$ with respect to $G$, 
denoted by $f\twoheadrightarrow_G h$, if there exists $t\in T_R[\bm{x}], g \in G$ such that $h = f-t\cdot g$ and $\lt(f) = t\cdot\lt(g)$.
Denote the transitive closure of relation $\twoheadrightarrow_G$ by $\twoheadrightarrow_G^*$.
\end{definition}

Note that when there exists $h$ such that $f\twoheadrightarrow_G h$, we say $f$ is \emph{strongly reducible with respect to $G$}.
Otherwise, we say $f$ is \emph{irreducible with respect to $G$}.
Now, we present the definition of strong Gr\"{o}bner bases, which work for PIRs.

\begin{definition}[Strong Gr\"{o}bner Basis \cite{norton2001strong}]
A \emph{strong Gr\"{o}bner basis} for an ideal $I\subseteq R[\bm{x}]$ is a finite subset $G\subseteq I$ such that for any polynomial $f\in R[\bm{x}]$, $f\in I$ if and only if $f\twoheadrightarrow_G^* 0$.
\end{definition}

While strong Gr\"{o}bner bases do not always exist for general rings, they exist for PIR.
When $R$ is a PIR, Norton et al.~\cite{norton2001strong} propose a constructive algorithm for computing them in finite time, which can be summarized as follows.
\begin{theorem}
When $R$ is a PIR and $F\subseteq R[\bm{x}]$ is finite, Algorithm 6.4 of~\cite{norton2001strong} always returns a strong Gr\"{o}bner basis of $\myangle{F}$ in finite time.
\end{theorem}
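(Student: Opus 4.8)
\noindent\emph{Proof proposal.} The plan is to establish this by a Buchberger-style argument, specializing the classical field-coefficient development to a principal ideal ring and, in effect, recapitulating the analysis of~\cite{norton2001strong}. Algorithm~6.4 can be viewed as a completion procedure: it initializes $G := F$ and then repeatedly (i) forms the \emph{critical-pair polynomials} of the current $G$, namely the ordinary S-polynomials of pairs $g_1,g_2\in G$ together with the extra polynomials arising from coefficient annihilators, which are needed because $R$ need not be a domain; (ii) strongly reduces each of them with respect to $G$; and (iii) appends any nonzero remainder to $G$, halting once every critical-pair polynomial strongly reduces to $0$. I would organize the argument around three claims: \emph{ideal invariance}, \emph{correctness upon halting}, and \emph{termination}.

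\smallskip
\noindent\emph{Ideal invariance and correctness.} First I would verify the loop invariant $\myangle{G}=\myangle{F}$: it holds at initialization, every critical-pair polynomial lies in $\myangle{G}$ by construction, and a strong reduction step $f\twoheadrightarrow_G h$ replaces $f$ by $h=f-t\cdot g\in\myangle{G}$, so appending remainders never changes the generated ideal. The crux is then a PIR analogue of Buchberger's criterion: a finite $G\subseteq R[\bm{x}]\setminus\{0\}$ is a strong Gr\"obner basis of $\myangle{G}$ if and only if every critical-pair polynomial of $G$ strongly reduces to $0$ with respect to $G$. Granting this, the halting condition of the algorithm coincides with the criterion, so the returned $G$ is a strong Gr\"obner basis of $\myangle{G}=\myangle{F}$, as claimed. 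The substantive direction of the criterion is the usual cancellation argument: given $f\in\myangle{G}$, pick a representation $f=\sum_i h_i g_i$ and let $\bm{x}^{\bm{\gamma}}$ be the largest monomial occurring among the $\lm(h_i)\cdot\lm(g_i)$; if $\bm{x}^{\bm{\gamma}}=\lm(f)$ then, using that the pertinent coefficient ideal is principal since $R$ is a PIR, $\lt(f)$ is a single-term multiple of some $\lt(g_i)$ and hence $f$ is strongly reducible; if $\bm{x}^{\bm{\gamma}}\succ\lm(f)$, the top terms cancel and one rewrites the combination through the critical-pair identities, whose remainders vanish by hypothesis, so as to strictly decrease $\bm{x}^{\bm{\gamma}}$ or the number of indices attaining it. Because $\prec$ is well-ordered this descent is finite, whence $f\twoheadrightarrow_G^* 0$.

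\smallskip
\noindent\emph{Termination.} Two nested terminations are required. Each individual strong reduction terminates because a step $f\twoheadrightarrow_G h$ with $\lt(f)=t\cdot\lt(g)$ cancels the leading term, forcing $\lm(h)\prec\lm(f)$, and $\prec$ is a well-order; hence $\twoheadrightarrow_G^*$ is well-founded. For the outer loop I would run an ascending-chain argument on the leading-term data of $G$. Concretely, for $R=\mathbb{Z}_{p^k}$ and each level $i\in\{0,\dots,k-1\}$, let $S_i(G)$ be the set of monomials $m$ such that some $g\in G$ satisfies $\lm(g)\mid m$ and $\nu_p(\lc(g))\le i$; each $S_i(G)$ is a monomial ideal of $\mathbb{N}^n$, and appending a nonzero remainder $h$ --- which happens only when $h$ is irreducible with respect to $G$, that is, no $g\in G$ with $\lm(g)\mid\lm(h)$ has $\lc(g)\mid\lc(h)$ --- strictly enlarges $S_{\nu_p(\lc(h))}(G)$ while leaving every $S_i(G)$ at least as large as before. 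Since there are only $k$ such monomial ideals and each can strictly grow only finitely often (Dickson's lemma, equivalently $\mathbb{Z}_{p^k}[\bm{x}]$ is Noetherian by Hilbert's basis theorem), only finitely many polynomials are ever added, so the algorithm halts in finite time; the general PIR case is handled along the same lines in~\cite{norton2001strong}.

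\smallskip
\noindent\emph{Main obstacle.} The delicate point is the PIR version of Buchberger's criterion. With zero divisors present one must identify the correct set of critical pairs --- the S-polynomials together with the ``wrap-around'' multiples coming from coefficient annihilators (for instance $p^{k-i}\cdot g$ when $\lc(g)$ has $p$-adic valuation $i$ in $\mathbb{Z}_{p^k}$) --- and then check that the cancellation-descent still closes when coefficients generate only principal, rather than maximal, ideals. Once this book-keeping is set up, ideal invariance and the chain-stabilization argument for termination are routine.
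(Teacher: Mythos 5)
First, note that the paper does not actually prove this statement: it is quoted verbatim from Norton--S\u{a}l\u{a}gean, and the paper later relies on it only by citation (Propositions 3.9 and 6.2 of that reference, in the proof of Theorem~\ref{prop:groeb_cond_over_gr}). So your proposal has to stand on its own as a reconstruction, and as written it has a genuine gap in the correctness step for a \emph{general} PIR. The flawed sentence is: ``if $\bm{x}^{\bm{\gamma}}=\lm(f)$ then, using that the pertinent coefficient ideal is principal since $R$ is a PIR, $\lt(f)$ is a single-term multiple of some $\lt(g_i)$.'' Principality only gives $\lc(f)\in\langle c\rangle$ where $c$ is a gcd of the relevant leading coefficients; $c$ need not be an associate of any single $\lc(g_i)$, so no single $\lt(g_i)$ need term-divide $\lt(f)$. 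Concretely, take $R=\mathbb{Z}$ (a PIR) and $G=F=\{2x,3x\}$: your critical pairs are exhausted immediately (the lcm-based S-polynomial is $3\cdot 2x-2\cdot 3x=0$ and all annihilators are zero), so your completion halts with $G=\{2x,3x\}$, yet $x\in\langle F\rangle$ is not strongly reducible by $G$. The missing ingredient is the gcd-type critical polynomials (G-polynomials, e.g.\ $-1\cdot 2x+1\cdot 3x=x$), which the actual algorithm over a PIR must process in addition to S- and A-polynomials; your ``main obstacle'' paragraph names only the annihilator multiples, so the Buchberger-style descent you sketch does not close.

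Relatedly, your termination argument is specific to chain rings: the stratification by $\nu_p(\lc(g))$ into finitely many levels works for $\mathbb{Z}_{p^k}$, but over a general PIR (again $\mathbb{Z}$) there are infinitely many divisibility classes of leading coefficients, and an added irreducible remainder can have leading term already inside the ideal generated by current leading terms, so the naive ascending-chain argument on $\langle\lt(G)\rangle$ does not immediately apply either; deferring ``the general PIR case'' to the cited paper is circular here, since the general PIR case is exactly the statement. To be fair, for the ring $\mathbb{Z}_{2^d}$ that the paper actually uses (Algorithm~\ref{alg:compute_groebner_bases}), coefficient divisibility is totally ordered, gcd-polynomials are redundant, and both your correctness and termination arguments essentially go through---this is in the spirit of the paper's Theorem~\ref{prop:groeb_cond_over_gr}---but that is a strictly weaker claim than the stated theorem about arbitrary PIRs.
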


\section{Quantifier-Free Equational Bit-Vector Theory}
\label{sec:smt_acceleration}

In this section, we introduce a novel approach for solving SMT formulas in the quantifier-free equational bit-vector theory. This section is organized as follows. First, we present the overall framework for SMT solving via strong Gr\"{o}bner bases. Second, we propose a key algorithmic improvement in the computation for strong Gr\"{o}bner bases, namely the calculation of the multiplicative inverse in the ring  $\mathbb{Z}_{2^d}$ of bit-vectors of size $d$. 

Below we fix the size $d$ for bit-vectors and $n$ variables $x_1,\ldots, x_n$ where each variable takes values from the ring $\mathbb{Z}_{2^d}$. 
Let $V=\{x_1,\ldots, x_n\}$.
Formulas in the quantifier-free equational bit-vector theory are given by the following grammar:
\[
\phi ::= f = g\,\mid\, f\ne g\, \mid f = ite(\phi, g, h) \,\mid\, \phi\vee\phi\, \mid\, \phi \wedge\phi\,\mid \neg\phi
\]
where $f,g, h\in \mathbb{Z}_{2^d}[\bm{x}]$. Informally, the quantifier-free equational bit-vector theory covers boolean combinations of (in)equations between polynomials in $\mathbb{Z}_{2^d}[\bm{x}]$. 

Note that in our grammar there is no distinction between signed and unsigned bit-vectors since we only consider equalities (i.e., $f=g$) and strict inequalities (i.e., $f \neq g$) and there are no bit-wise operations (like bit-or, bit-and, etc).

\subsection{SMT Solving with Strong Gr\"{o}bner Bases}

Our approach is built upon the framework of DPLL(T)~\cite[Chapter 11]{DBLP:series/txtcs/KroeningS16} with conflict-driven clause learning (CDCL)~\cite[Chapter 4]{DBLP:series/faia/336}, which is a technique widely adopted in modern SMT solvers. 
The general workflow of DPLL(T) is as follows. DPLL(T) regards each atomic predicate in the SMT formula $\phi$ as a propositional variable so that the original SMT formula is transformed into a propositional formula and partially assigns truth values to these propositional variables in a back-tracking procedure while checking whether conflict arises. When a conflict is detected, DPLL(T) tries to learn a new clause through CDCL to guide the subsequent SMT solving. 

A central component of a DPLL(T) solver is the satisfiability checking of a conjunction of atomic predicates and their negations. 
In the DPLL(T) solving of our quantifier-free equational bit-vector theory, the atomic predicates are equational predicates of the forms $f=g$ (with their negations $f\ne g$). Therefore, the aforementioned central component corresponds to the satisfiability checking of a system of polynomial (in)equations modulo $2^d$ for bit-vectors. 

We solve the satisfiability of a system of polynomial (in)equations modulo $2^d$ via strong Gr\"{o}bner bases. The detailed algorithmic steps are as follows. Below we fix an input finite conjunction $\Phi=\bigwedge_i \phi_i$ where each $\phi_i$ is either an equation $f=g$ or an inequality $f\ne g$ where $f,g\in \mathbb{Z}_{2^d}[\bm{x}]$.  

\smallskip
\noindent{\em $\blacktriangleright$ Step A1: Pre-processing.} Our algorithm first operates a pre-processing that transforms each (in)equation $\phi_i$ in the conjunction $\Phi$ into an equivalent equation of the form $h=0$ for some polynomial $h$ in $\mathbb{Z}_{2^d}[\bm{x}]$. 
If $\phi_i$ is an equation $f_i=g_i$, then we simply set $h_i=f_i-g_i$.
Otherwise, if $\phi_i$ is an inequation $f_i\ne g_i$, let $h_i=z_i (f_i-g_i)-2^{d-1}$ where $z_i$ is a fresh variable.

The idea behind introducing the fresh variable $z_i$ to handle inequations is that in the ring $\mathbb{Z}_{2^d}$, we have $a\not\equiv 0\mod 2^d$ if and only if $a\cdot b \equiv 2^{d-1}\mod 2^d$ for some $b\in \mathbb{Z}_{2^d}$. 
We then denote the collection of all the $h_i$ by $H$.
The following propositions demonstrate the correctness of the pre-processing. 
The proofs of Proposition~\ref{prop:odd_inverse} and Proposition~\ref{prop:equiv_cond_for_phi_sat} are deferred to Appendix~\ref{appendix:omitted_proof_of_prop1} and Appendix~\ref{appendix:omitted_proof_of_prop2}.

\begin{proposition}\label{prop:odd_inverse}
Given $a\in \mathbb{Z}_{2^d}\setminus \{0\}$ with $\nu_2(a) = \alpha$, then there exists an integer $b\in  \mathbb{Z}_{2^d}$ such that $ab= 2^\alpha$.
Especially, if $\alpha=0$, $b$ is unique.
\end{proposition}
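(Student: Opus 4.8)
The plan is to prove the proposition by an explicit arithmetic construction in $\mathbb{Z}_{2^d}$, splitting on whether $\alpha = \nu_2(a)$ is zero. Write $a = 2^\alpha \cdot u$ where $u$ is odd (this is exactly the definition of $\nu_2$). The claim then reduces to showing that the odd part $u$ has a multiplicative inverse modulo $2^d$, because if $u \cdot w \equiv 1 \pmod{2^d}$ then taking $b = w$ gives $a \cdot b = 2^\alpha \cdot u \cdot w \equiv 2^\alpha \pmod{2^d}$, which is the desired identity (note $2^\alpha$ is a well-defined element of $\mathbb{Z}_{2^d}$ since $\alpha < d$, as $a \neq 0$).

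\textbf{Step 1: Invertibility of odd elements.} First I would establish that every odd $u$ is a unit in $\mathbb{Z}_{2^d}$. The cleanest route is $\gcd(u, 2^d) = 1$, so Bézout gives integers $s,t$ with $su + t\cdot 2^d = 1$, hence $su \equiv 1 \pmod{2^d}$ and $w := s \bmod 2^d$ is an inverse. Alternatively one can give the telescoping-product formula $u^{-1} \equiv u \cdot \prod_{i=0}^{\lceil \log_2 d\rceil - 1}(1 + (1 - u)^{2^i}) \pmod{2^d}$, which follows from $1 - u^{2^k} = \prod_{i=0}^{k-1}(1 + (1-u)^{2^i}) \cdot (1 - u)$ together with the fact that $(1-u)$ is even, so $(1-u)^{2^k}$ vanishes mod $2^d$ once $2^k \geq d$; but for this proposition the Bézout argument suffices and I would use it for brevity.

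\textbf{Step 2: Existence of $b$.} With $w = u^{-1}$ in hand, set $b = w$ (reduced mod $2^d$) and verify $ab = 2^\alpha u w \equiv 2^\alpha \pmod{2^d}$ directly. This handles the general case.

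\textbf{Step 3: Uniqueness when $\alpha = 0$.} When $\alpha = 0$, $a = u$ is odd, hence a unit, and the equation $ab = 1$ in a ring has at most one solution for $b$ (if $ab = ab' = 1$ then $b = b\cdot 1 = b(ab') = (ab)b' = b'$). I expect the whole proposition to be routine; the only mild subtlety, and the one place I would be careful, is keeping straight that $2^\alpha$ really is a legitimate (nonzero) element of $\mathbb{Z}_{2^d}$ — i.e., that $\alpha < d$ — which is guaranteed precisely by the hypothesis $a \neq 0$, and noting that the identity $ab = 2^\alpha$ is an equality of elements of $\mathbb{Z}_{2^d}$, equivalently a congruence mod $2^d$. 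There is no real obstacle here; the argument is a direct consequence of $\gcd(\text{odd}, 2^d) = 1$.
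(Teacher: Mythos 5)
Your proof is correct and follows essentially the same route as the paper's: write $a = 2^\alpha u$ with $u$ odd, invert $u$ via B\'ezout's identity, and verify $ab = 2^\alpha$. The only cosmetic difference is in the uniqueness step, where you use the standard associativity argument for uniqueness of a unit's inverse while the paper argues by contradiction from $a(b_1 - b_2) = 0$ with $a$ odd; both are immediate.
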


\begin{proposition}\label{prop:equiv_cond_for_phi_sat}
$\Phi$ is satisfiable if and only if $\mathcal{V}(H)$ is non-empty.
\end{proposition}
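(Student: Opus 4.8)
The plan is to prove the two directions separately, with the pre-processing dictionary between $\Phi$ and $H$ as the central bookkeeping device. Recall $H=\{h_i\}$ where $h_i=f_i-g_i$ when $\phi_i$ is an equation $f_i=g_i$, and $h_i=z_i(f_i-g_i)-2^{d-1}$ (with $z_i$ fresh) when $\phi_i$ is an inequation $f_i\ne g_i$. Write $\bm{z}$ for the tuple of fresh variables; a point of $\mathcal{V}(H)$ lives in $\mathbb{Z}_{2^d}^{n+m}$ where $m$ is the number of inequations.

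\textbf{($\Rightarrow$)} Suppose $\bm{v}\in\mathbb{Z}_{2^d}^n$ satisfies $\Phi$. For each equation $\phi_i$, clearly $h_i(\bm{v})=f_i(\bm{v})-g_i(\bm{v})=0$, so no choice of $z_i$ is needed. For each inequation $\phi_i\colon f_i\ne g_i$, the value $a:=f_i(\bm{v})-g_i(\bm{v})$ is nonzero in $\mathbb{Z}_{2^d}$, so $\nu_2(a)=\alpha$ for some $0\le \alpha\le d-1$. The key point: we must produce $b$ with $a\cdot b = 2^{d-1}$, not merely $a\cdot b=2^{\alpha}$. Writing $a=2^{\alpha}u$ with $u$ odd, Proposition~\ref{prop:odd_inverse} gives $u^{-1}$ (the odd case $\alpha'=0$), and since $\alpha\le d-1$ we may take $b:=u^{-1}\cdot 2^{d-1-\alpha}$, whence $a\cdot b = 2^{\alpha}u\cdot u^{-1}2^{d-1-\alpha}=2^{d-1}$. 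Setting $z_i:=b$ makes $h_i(\bm{v},\bm{z})=0$. Assembling the $z_i$ over all inequations gives a point of $\mathcal{V}(H)$, so $\mathcal{V}(H)\ne\emptyset$.

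\textbf{($\Leftarrow$)} Conversely, suppose $(\bm{v},\bm{z})\in\mathcal{V}(H)$. The projection $\bm{v}\in\mathbb{Z}_{2^d}^n$ is the candidate model of $\Phi$. For each equation $\phi_i$, $h_i(\bm{v})=0$ gives $f_i(\bm{v})=g_i(\bm{v})$ directly. For each inequation $\phi_i$, $h_i(\bm{v},\bm{z})=0$ reads $z_i\cdot(f_i(\bm{v})-g_i(\bm{v}))=2^{d-1}$; since $2^{d-1}\ne 0$ in $\mathbb{Z}_{2^d}$ (as $d\ge 1$), the factor $f_i(\bm{v})-g_i(\bm{v})$ cannot be $0$, i.e.\ $f_i(\bm{v})\ne g_i(\bm{v})$. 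Hence $\bm{v}$ satisfies every conjunct, so $\Phi$ is satisfiable.

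The only nontrivial step is the forward direction's choice of witness $b$: one has to go through the extraction of the $2$-adic valuation $\alpha=\nu_2(a)$, invoke Proposition~\ref{prop:odd_inverse} on the odd part $u$, and then scale by $2^{d-1-\alpha}$ — which is where the exponent $2^{d-1}$ in the definition of $h_i$ (rather than, say, $1$) is essential, since $a$ need not be invertible. The backward direction is a one-line non-vanishing argument using $2^{d-1}\ne 0$. I would present the forward direction in full and treat the backward direction briefly.
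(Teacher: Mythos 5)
Your proof is correct and follows essentially the same route as the paper's: witness each inequation $f_i\ne g_i$ by choosing $z_i$ with $z_i\,(f_i(\bm{v})-g_i(\bm{v}))=2^{d-1}$ in the forward direction, and use $2^{d-1}\ne 0$ in $\mathbb{Z}_{2^d}$ to rule out $f_i(\bm{v})=g_i(\bm{v})$ in the backward direction. In fact you spell out a step the paper leaves implicit, namely that Proposition~\ref{prop:odd_inverse} only yields $a\cdot b=2^{\nu_2(a)}$, so one must scale the odd-part inverse by $2^{d-1-\nu_2(a)}$ to reach $2^{d-1}$.
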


After the preprocessing, our algorithm constructs a finite set of polynomials $H$.
According to the above proposition, to check whether $\Phi$ is satisfiable, it suffices to examine the emptiness of $\mathcal{V}(H)$. 
We present this in the next step.

\smallskip
\noindent{\em $\blacktriangleright$ Step A2: Emptiness checking of $\mathcal{V}(H)$.}
\label{sec:compute_groebner_bases} In the previous step, we 
have reduced the satisfiability of the original SMT formula $\Phi$ into the existence of a common root of the polynomials in $H$ modulo $2^d$.
Our strategy is first to try witnessing the emptiness of $\mathcal{V}(H)$ via strong Gr\"{o}bner bases for the ideal $\myangle{H}$, and then resort to the root-finding of polynomials if the witnessing fails. 
The witnessing through strong Gr\"{o}bner bases has the potential to determine the unsatisfiability of $\Phi$ quickly and, therefore, can speed up the overall SMT solving. 

By the definition of ideals, one has that $\mathcal{V}(H)$ is empty if there is some nonzero element $c\in \mathbb{Z}_{2^d}$ in $\myangle{H}$. 
This is given by the following proposition. 

\begin{proposition}\label{prop:empvariety}
If there exists a nonzero $c\in \mathbb{Z}_{2^d}$ in $\myangle{H}$, then 
$\mathcal{V}(H)=\emptyset$. 
\end{proposition}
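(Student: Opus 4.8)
The plan is to chain two elementary observations: that the variety of a set of polynomials equals the variety of the ideal it generates (already noted in Section~\ref{sec:background}), and that a nonzero constant polynomial has no root over $\mathbb{Z}_{2^d}$.

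Concretely, let $c\in\myangle{H}$ with $c\in\mathbb{Z}_{2^d}\setminus\{0\}$, where throughout $\myangle{H}$ denotes the ideal of the \emph{polynomial} ring $\mathbb{Z}_{2^d}[\bm{x}]$ generated by $H$. First, since $\{c\}\subseteq\myangle{H}$ and $\mathcal{V}(\cdot)$ reverses inclusion (a common root of a larger family of polynomials is in particular a common root of a smaller one), we have $\mathcal{V}(\myangle{H})\subseteq\mathcal{V}(\{c\})$. Second, $\mathcal{V}(\{c\})=\emptyset$: a constant polynomial evaluates to itself at every point, so $c(\bm{v})=c\neq 0$ for every $\bm{v}\in\mathbb{Z}_{2^d}^n$, and hence no $\bm{v}$ lies in $\mathcal{V}(\{c\})$. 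Combining these with the identity $\mathcal{V}(H)=\mathcal{V}(\myangle{H})$ from Section~\ref{sec:background} yields $\mathcal{V}(H)=\mathcal{V}(\myangle{H})\subseteq\mathcal{V}(\{c\})=\emptyset$, i.e.\ $\mathcal{V}(H)=\emptyset$.

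Equivalently, the argument can be phrased directly via evaluation homomorphisms: assume for contradiction that some $\bm{v}\in\mathcal{V}(H)$ exists and write $c=\sum_{i}r_i h_i$ with $r_i\in\mathbb{Z}_{2^d}[\bm{x}]$ and $h_i\in H$; applying the ring homomorphism $f\mapsto f(\bm{v})$ to both sides gives $c=\sum_i r_i(\bm{v})\,h_i(\bm{v})=0$ because each $h_i(\bm{v})=0$, contradicting $c\neq 0$. Note this only uses that $\mathbb{Z}_{2^d}$ is a commutative ring, not a field.

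There is essentially no obstacle here; the only points requiring a moment's care are that $\myangle{H}$ must be read inside $\mathbb{Z}_{2^d}[\bm{x}]$ rather than inside the scalar ring $\mathbb{Z}_{2^d}$, and that the implication is one-directional — its converse fails over $\mathbb{Z}_{2^d}$ for $d>1$ (for instance $\mathcal{V}(\{x^2+x+1\})=\emptyset$ over $\mathbb{Z}_4$, yet no nonzero constant lies in $\myangle{x^2+x+1}$, since the leading coefficient $1$ is a unit and multiplying by any nonzero polynomial strictly raises the degree). This one-directionality is exactly why the emptiness check in Step~A2 must still fall back to root-finding when strong Gr\"obner bases fail to expose a nonzero constant.
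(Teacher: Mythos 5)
Your proof is correct and matches the paper's reasoning: the paper gives no separate proof of this proposition, treating it as immediate from the definition of ideals together with the identity $\mathcal{V}(H)=\mathcal{V}(\myangle{H})$, which is precisely the evaluation argument you spell out. Your added remarks (reading $\myangle{H}$ inside $\mathbb{Z}_{2^d}[\bm{x}]$, and the failure of the converse explaining the fallback in Step~A2) are accurate but not part of the claimed statement.
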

The above proposition suggests that to witness the emptiness of $\mathcal{V}(H)$, it suffices to find a nonzero constant $c\in \mathbb{Z}_{2^d}$ in the ideal $\myangle{H}$. The following theorem establishes a connection between the existence of such constant in the ideal $\myangle{H}$ and the strong Gr\"{o}bner basis for the ideal. 

\begin{theorem}\label{thm:const}
There exists a constant nonzero polynomial $f\in \mathbb{Z}_{2^d}$ in $\myangle{H}$ if and only if for any strong Gr\"{o}bner basis $G$ for the ideal $\myangle{H}$, $G$ contains some nonzero constant $g \in \mathbb{Z}_{2^d}$. 
\end{theorem}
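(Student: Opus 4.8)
The plan is to prove the two implications of the biconditional separately; write $I:=\myangle{H}$ for brevity, and recall that $\mathbb{Z}_{2^d}$ is a PIR, so $I$ does admit a strong Gr\"obner basis (produced, e.g., by the constructive algorithm of~\cite{norton2001strong}).

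First I would dispatch the ``if'' direction, which is immediate. Pick any strong Gr\"obner basis $G$ for $I$ (one exists since $\mathbb{Z}_{2^d}$ is a PIR). By hypothesis every strong Gr\"obner basis for $I$ contains a nonzero constant, so in particular this $G$ contains some nonzero $g\in\mathbb{Z}_{2^d}$. By the very definition of a strong Gr\"obner basis, $G\subseteq I$, hence $g$ is a nonzero constant polynomial lying in $I=\myangle{H}$, as required. This uses nothing beyond the definitions and the existence result.

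For the ``only if'' direction I would argue as follows. Assume $f\in\mathbb{Z}_{2^d}\setminus\{0\}$ is a nonzero constant with $f\in I$, and let $G$ be an \emph{arbitrary} strong Gr\"obner basis for $I$. By definition of strong Gr\"obner basis, $f\twoheadrightarrow_G^* 0$; since $f\neq 0$, this reduction chain is nonempty, so $f$ is strongly reducible with respect to $G$ and there exist $t=c_t\,\bm{x}^{\bm\beta}\in T_{\mathbb{Z}_{2^d}}[\bm{x}]$ and $g\in G$ with $\lt(f)=t\cdot\lt(g)$. Because $f$ is a constant, $\lt(f)=f=f\cdot\bm{x}^{\mathbf{0}}$; writing $\lt(g)=c_g\,\bm{x}^{\bm\gamma}$ with $\bm\gamma=\lm(g)$, the identity $f\cdot\bm{x}^{\mathbf{0}}=c_t c_g\cdot\bm{x}^{\bm\beta+\bm\gamma}$ forces $\bm\beta+\bm\gamma=\mathbf{0}$ in $\mathbb{N}^n$, hence $\bm\gamma=\lm(g)=\mathbf{0}$. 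Since the monomial ordering is well-ordered, $\bm{x}^{\mathbf{0}}\prec\bm{x}^{\bm\alpha}$ for every $\bm\alpha\neq\mathbf{0}$, so $\bm{x}^{\mathbf{0}}$ is the unique monomial that is $\preceq\bm{x}^{\mathbf{0}}$; as every monomial occurring in $g$ is $\preceq\lm(g)=\bm{x}^{\mathbf{0}}$, the polynomial $g$ consists of the single monomial $\bm{x}^{\mathbf{0}}$, i.e.\ $g\in\mathbb{Z}_{2^d}$ is a constant. Finally $g\neq 0$ because the members of a strong Gr\"obner basis lie in $R[\bm{x}]\setminus\{0\}$, which is built into the strong-reduction relation. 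Since $G$ was arbitrary, every strong Gr\"obner basis for $I$ contains a nonzero constant.

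I do not anticipate a serious obstacle: the argument is essentially a direct unfolding of the definitions of leading term, strong reduction, and well-ordered monomial ordering, together with elementary degree bookkeeping on monomials. The one point that needs a moment's care is the passage ``$f\twoheadrightarrow_G^* 0$'' $\Rightarrow$ ``$f$ is strongly reducible with respect to $G$'': one must note that since $f\neq 0$ the reduction chain witnessing $f\twoheadrightarrow_G^* 0$ is nonempty, and it is precisely its \emph{first} step that supplies the constant divisor $g\in G$. (Relatedly, one reads $f\twoheadrightarrow_G^* 0$ in the standard way, as a finite chain of strong reductions whose last step satisfies $f_{k-1}-t\cdot g=0$.)
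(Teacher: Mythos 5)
Your proposal is correct and follows essentially the same route as the paper's proof: the nontrivial direction takes the first step of the strong reduction $f\twoheadrightarrow_G^* 0$ for the nonzero constant $f$, deduces from $\lt(f)=t\cdot\lt(g)$ that $\lm(g)=\bm{x}^{\mathbf{0}}$, and uses the well-ordered monomial ordering to conclude $g$ is a nonzero constant, while the converse follows from $G\subseteq\myangle{H}$. Your write-up is merely a bit more explicit than the paper about the nonemptiness of the reduction chain and the existence of a strong Gr\"obner basis over the PIR $\mathbb{Z}_{2^d}$.
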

\begin{proof}
According to the definition of the strong Gröbner bases, for any polynomial $f$, $f \in \myangle{H}$ if and only if $f \twoheadrightarrow_{G}^* 0$.
It implies there exists a polynomial $g\in G$, a monomial $m\in M_R[\bm{x}]$ and a coefficient $c\in \mathbb{Z}_{2^n}$ such that $\lt(f) = c m \cdot \lt(g)$. 
If $\deg(f)= 0$, then $\lt(f)$ is $f$ itself.
In addition, since $\bm{x}^{\textbf{0}} = \lm(f) = m\cdot \lm(g)$, we have $m = \lm(g) = \bm{x}^{\textbf{0}}$.
Thus, according to the property of well-ordered ordering, $g$ is also a constant polynomial.
Since $f\neq 0$, $g$ is also a nonzero constant polynomial.
Conversely, if there is a nonzero constant within $G$, it certainly implies that there exists a nonzero constant in $\myangle{H}$.
\end{proof}

By Theorem~\ref{thm:const}, we reduce the witness of a nonzero constant polynomial in the ideal to that in a strong Gr\"{o}bner basis. 
Thus, our algorithm computes a strong Gr\"{o}bner basis for the ideal $\myangle{H}$ to witness the emptiness of $\mathcal{V}(H)$ and the unsatisfiability of the original conjunction $\Phi$. 
If the witnessing fails, our algorithm resorts to existing computer algebra methods to check the emptiness of $\mathcal{V}(H)$. 

In Algorithm~\ref{alg:compute_groebner_bases}, we implement the original algorithm in~\cite{norton2001strong} to compute a strong Gr\"{o}bner basis $GB(H)$ of the ideal $\myangle{H}$. 
The basic idea is to iteratively construct new polynomials from existing ones $f$ derived from $H$ and add them to the candidate bases $G$ until convergence.
Before describing this algorithm, we first introduce two key concepts used to construct new polynomials, namely \emph{S-polynomials} and \emph{A-polynomials}, which are defined as follows.
Both provide approaches to constructing new polynomials from original polynomials of $G$.

\begin{definition}[S-polynomials~\cite{norton2001strong}]
Given two distinct polynomials $f_1, f_2\in R[\bm{x}]\setminus \{0\}$, a polynomial in form of  
$c_1m_1f_1 - c_2m_2 f_2$ where $c_1, c_2 \in R$ is called a \emph{S-polynomial}, if
$c_1 \lc(f_1) = c_2\lc(f_2)$ is a least common multiple of $\lc(f_1)$ and $\lc(f_2)$, and $m_i = \mathrm{lcm}(\lm(f_1), \lm(f_2))/\lm(f_i)\in M_R[\bm{x}]$, where $ \mathrm{lcm}(\lm(f_1), \lm(f_2))$ is the least multiple of $\lm(f_1)$ and $\lm(f_2)$.
We denote the set of S-polynomials of $f_1$ and $f_2$ by $\mathrm{Spoly}(f_1, f_2)$.
\end{definition}

\begin{definition}[A-polynomials~\cite{norton2001strong}]
Given $f\in R[\bm{x}]\setminus \{0\}$, an \emph{A-polynomial} of $f$ is any polynomial in form of $a\cdot f$, where $\myangle{a}_{R} = \mathrm{Ann}(\lc(f))$ and $\mathrm{Ann}(x) := \{a:ax = 0\}$.
We denote the set of A-polynomials of $f$ by $\mathrm{Apoly}(f)$.
\end{definition} 
Besides, a new polynomial can also be generated by iteratively reducing $f$ with respect to the polynomials in $G$ until a remainder 
$h$ is obtained, which is irreducible.
The remainder $h$ is defined as the \emph{normal form of $f$ with respect to $G$} and denoted by $\NF(f\ |\ G)$.
It can be verified that $f\twoheadrightarrow_G^* \NF(f\ |\ G)$, and if $G$ is a Gr\"obner basis of $\myangle{H}$, $f\in \myangle{H}$ if and only $\NF(f\ |\ G) = 0$.

\begin{algorithm}
	\caption{Finding a strong Gr\"obner basis of $\myangle{H}$}
	\label{alg:compute_groebner_bases}
	\begin{minipage}{0.5\textwidth}
		\begin{algorithmic}[1] 
			\State $G\leftarrow H$ and $C\leftarrow H$\;
                \State $P\leftarrow \{(f_1, f_2)\ :\ f_1\neq f_2\in H\}$\;
		    \While{$P\neq \varnothing$ or $C\neq \varnothing$} 
                    \State $h\leftarrow 0$\;
                    \If{$C\neq \varnothing$}
                        \State Pop a polynomial $f_1$ from $C$\;
                        \State $h\leftarrow \apoly{f_1}$\; \label{line:compute_apoly}
                    \Else 
                        \State Pop a pair $(f_1, f_2)$ from $P$\;
                        \State $h \leftarrow \spoly{f_1, f_2}$\;\label{line:compute_spoly}
                    \EndIf
                    \State Compute $g\leftarrow \NF(h\ |\ G)$\; \label{line:cal_normal_form}
                    \If{$g\neq 0$}
                        \State $P\leftarrow P\cup \{(g, f):f\in G\}$\; \label{line:update_B}
                        \State $C \leftarrow C\cup \{g\}, G \leftarrow G\cup \{g\}$\;\label{line:update_C}
                    \EndIf
                \EndWhile
                \Return{$G$}
            \end{algorithmic}
	\end{minipage}
	\begin{minipage}{0.5\textwidth}
		\begin{algorithmic}[1]
            \renewcommand{\algorithmicrequire}{\textbf{Parameters:}}	
            \Require $f, G$
            \Ensure $\NF(f\ |\ G)$
            \Function{$\NF(f\ |\ G)$}{}
                \While{there exists a $g\in G$ and a term $t=c_t\cdot m_t$ of $f$, s.t., $\nu_2(\lc(g)) \le \nu_2(c_t)$ and $\lm(g)\ |\ m_t$}
                    \State $f \leftarrow f -  \frac{t}{\lt(g)} \cdot g$\; \label{line:division_of_normal_form} 
                \EndWhile
                \Return{$f$}
            \EndFunction
		\end{algorithmic}
	\end{minipage}
\end{algorithm}

Below, we are ready to give a brief description of Algorithm~\ref{alg:compute_groebner_bases}. 
Given a set of polynomials $H$ as input, it initializes the candidate polynomials as $G = H$. 
Meanwhile, it sets $C = H$ as the set of polynomials required to compute their A-polynomials and $P$ as the set of pairs required to compute S-polynomials.
When the set $C$ is non-empty, it pops a polynomial $f\in C$ and computes one of its A-polynomials $h$ (Line~\ref{line:compute_apoly}).
If its normal form $g= \mathrm{NF}(h\ |\ G)\neq 0$, it appends $g$ into $G$ of candidate Gr\"obner basis.
Besides, it updates $P$ and $C$ through building new pairs $\{(g, f):g\in G\}$ and appending $g$ to $C$ (Line~\ref{line:update_B},\ref{line:update_C}).
When $C$ is empty, it pops a pair $(f_1, f_2)$ from $P$ and computes its S-polynomial $h$ (Line~\ref{line:compute_spoly}).
When the normal form of $h$ is non-zero, we do the same operation as above.
In particular, we instantiate the A-polynomial and S-polynomial selected on Line~\ref{line:compute_apoly} and Line~\ref{line:compute_spoly} by $\apoly{f_1}$ and $\spoly{f_1,f_2}$, whose definitions are as follows.
\begin{align*}
& \textsf{apoly}(f_1) := 2^{d-k_1} \cdot f_1, \\ 
& \textsf{spoly}(f_1, f_2)  := 2^{d-k_1}\cdot s_2\cdot \bm{x}^{\bm{\alpha}-\bm{\alpha_1}} \cdot f_1 - 2^{d-k_2}\cdot s_1\cdot \bm{x}^{\bm{\alpha}-\bm{\alpha_2}} \cdot f_2,
\end{align*}
where $\bm{x}^{\alpha_i} = \lm(f_i), \bm{x}^{\alpha} = \mathrm{lcm}(\bm{x}^{\bm{\alpha_1}}, \bm{x}^{\bm{\alpha_2}}), k_i = \nu_2(\lc(f_i))$ and $\lc(f_i) = 2^{k_i}\cdot s_i$. 
The correctness of Algorithm~\ref{alg:compute_groebner_bases} is given in Theorem~\ref{prop:groeb_cond_over_gr}, with proof in Appendix~\ref{appendix:omitted_proof_of_prop:groeb_cond_over_gr}.

\begin{algorithm}[t]
\caption{Finding common roots of a Gr\"obner basis.}
\label{alg:find_zeros}
\begin{algorithmic}[1]
\Function{$FindZeros(H, \mathcal{M})$}{}
\State $G \leftarrow GB(H)$\;
\State \textbf{if} \textit{there exists some constant in $G$} \textbf{then} \textbf{return} $\perp$
\State \textbf{if} $\abs{\mathcal{M}} = n$ \textbf{then} \textbf{return} $\mathcal{M}$\;  
\If{there exists an univariate polynomial $p\in \mathbb{Z}_{2^n}[x]$}
\State \textbf{if} \textit{$Zeros(p) =\emptyset$}  \textbf{then} \textbf{return} $\perp$\; 
\For{$z\in Zeros(p)$} \Comment{Traverse zeros of a univariate polynomial}
\State $r\leftarrow FindZeros(G\cup \{x - z\}, \mathcal{M} \cup \{x \mapsto z\})$\; \label{line:solve_univar}
\State \textbf{if} \textit{$r\neq \perp$}  \textbf{then} \textbf{return} $r$\; 
\EndFor
\ElsIf{there exists a polynomial $p$ can be decomposed as $p = f\cdot g$}
\For{$i= 0, \ldots, d-1$} \Comment{Factorize $p$ over $\mathbb{Z}$}
\State $r\leftarrow FindZeros(G\cup \{2^i\cdot f, 2^{d-i}\cdot g\}\setminus \{p\}, \mathcal{M})$\;
\State \textbf{if} \textit{$r\neq \perp$}  \textbf{then} \textbf{return} $r$\; \label{line:factorization}
\EndFor
\Else
\State Arbitrarily select a variable $x \notin \mathcal{M}$\; \Comment{Exhaustive search}
\For{$z= 0, \ldots, 2^d-1$}
\State $r\leftarrow FindZeros(G\cup \{x - z\}, \mathcal{M}\cup \{x \mapsto z\})$\;
\State \textbf{if} \textit{$r\neq \perp$}  \textbf{then} \textbf{return} $r$\; \label{line:exhaust}
\EndFor
\EndIf
\State \Return{$\perp$}\;
\EndFunction
\end{algorithmic}
\end{algorithm}

\begin{theorem}
\label{prop:groeb_cond_over_gr}
For any finite set $H\subseteq \mathbb{Z}_{2^d}[x_1,\ldots, x_n]$, Algorithm~\ref{alg:compute_groebner_bases} always returns a strong Gr\"obner basis of $\myangle{H}$ in finite time.
\end{theorem}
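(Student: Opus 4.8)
The plan is to recognize Algorithm~\ref{alg:compute_groebner_bases} as a concrete instantiation of the abstract strong-Gr\"obner-basis procedure of Norton et al.\ (Algorithm~6.4 of~\cite{norton2001strong}), whose correctness and termination over an arbitrary PIR are already available to us as the theorem recalled in Section~\ref{sec:background}. Since $\mathbb{Z}_{2^d}$ is a PIR, it then suffices to check that each concrete choice made in Algorithm~\ref{alg:compute_groebner_bases} --- the bookkeeping of $G$, $C$ and $P$, the closed forms $\apoly{\cdot}$ and $\spoly{\cdot,\cdot}$, and the $\NF$ subroutine --- is a legitimate realization of the corresponding abstract object (an A-polynomial, an S-polynomial, a normal form), after which termination and correctness transfer from the abstract theorem essentially verbatim.

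First I would verify the A-polynomial. Writing $\lc(f_1)=2^{k_1}s_1$ with $s_1$ odd, $s_1$ is a unit of $\mathbb{Z}_{2^d}$, so $\mathrm{Ann}(\lc(f_1))=\mathrm{Ann}(2^{k_1})=\myangle{2^{d-k_1}}_{\mathbb{Z}_{2^d}}$; hence $\apoly{f_1}=2^{d-k_1}f_1$ lies in $\mathrm{Apoly}(f_1)$. For the S-polynomial, the choice $\bm{x}^{\bm\alpha}=\mathrm{lcm}(\bm{x}^{\bm{\alpha_1}},\bm{x}^{\bm{\alpha_2}})$ gives monomial multipliers $\bm{x}^{\bm\alpha-\bm{\alpha_1}}$ and $\bm{x}^{\bm\alpha-\bm{\alpha_2}}$ that are exactly $\mathrm{lcm}(\lm(f_1),\lm(f_2))/\lm(f_i)$, and I would check that the two coefficient products appearing in $\spoly{f_1,f_2}$ agree and form a least common multiple of $\lc(f_1)$ and $\lc(f_2)$ in $\mathbb{Z}_{2^d}$ --- using that each $s_i$ is a unit and that $2^d=0$ --- so that $\spoly{f_1,f_2}$ is a valid element of $\mathrm{Spoly}(f_1,f_2)$, possibly up to the rescaling by units that the definition permits. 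Finally, for $\NF$ I would argue that each iteration replaces $f$ by $f-\frac{t}{\lt(g)}\cdot g$, where $\frac{t}{\lt(g)}\in T_{\mathbb{Z}_{2^d}}[\bm{x}]$ is a genuine term precisely because the loop guard forces $\nu_2(\lc(g))\le\nu_2(c_t)$ (equivalently $\lc(g)\mid c_t$ in $\mathbb{Z}_{2^d}$) and $\lm(g)\mid m_t$; such a step cancels the term $t$ and creates only terms whose monomials lie strictly below $m_t$ in the well-ordering, so the multiset of monomials of $f$ strictly decreases, the loop halts, the output is irreducible with respect to $G$, and $f-\NF(f\mid G)\in\myangle{G}$ --- which is exactly what the abstract procedure asks of a normal form.

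With these facts in hand I would assemble the argument. The invariant $\myangle{G}=\myangle{H}$ holds initially ($G=H$) and is preserved, since every polynomial ever adjoined to $G$ is the normal form of an A- or S-polynomial built from the current $G$, hence lies in $\myangle{G}$. The main loop schedules and eventually processes an A-polynomial for every polynomial placed in $G$ (via $C$) and an S-polynomial for every pair of distinct polynomials that appear in $G$ (via $P$), adjoining any nonzero normal form $g$ and re-scheduling the corresponding new tasks (Lines~\ref{line:update_B}--\ref{line:update_C}); this is precisely the closure condition under which Norton et al.\ establish that the result is a strong Gr\"obner basis of $\myangle{H}$ and that the procedure terminates over a PIR. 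Invoking that theorem, together with $\mathbb{Z}_{2^d}$ being a PIR, gives both assertions of Theorem~\ref{prop:groeb_cond_over_gr}.

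I expect the real work to sit in the second paragraph, namely matching the closed-form $\spoly{\cdot,\cdot}$ (together with $\apoly{\cdot}$) to the abstract S-/A-polynomial definitions: over $\mathbb{Z}_{2^d}$ the ``syzygy'' and ``annihilator'' components of a critical pair interact --- the coefficient of $\bm{x}^{\bm\alpha}$ in $\spoly{f_1,f_2}$ already collapses in $\mathbb{Z}_{2^d}$ --- so one must be careful that Algorithm~\ref{alg:compute_groebner_bases} still covers every critical pair demanded by the abstract procedure rather than a strict subset of them. A secondary, easier point is reconciling the $\NF$ subroutine, which may reduce non-leading terms, with the leading-term-only reduction $\twoheadrightarrow_G$ used to define strong Gr\"obner bases: this extra reduction power is harmless, because irreducibility of every term implies irreducibility of the leading term, and every reduction step still subtracts an element of $\myangle{G}$.
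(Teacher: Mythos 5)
Your proposal takes essentially the same route as the paper's own proof: the paper likewise verifies that $\apoly{\cdot}$ and $\spoly{\cdot,\cdot}$ are legitimate instances of A- and S-polynomials over $\mathbb{Z}_{2^d}$ (via the annihilator $\mathrm{Ann}(\lc(f))=\myangle{2^{d-\nu_2(\lc(f))}}$ and the lcm of the leading coefficients) and then invokes Norton et al.'s termination and correctness results for PIRs to conclude. Your additional explicit checks of the $\NF$ subroutine and of the $C$/$P$ scheduling only spell out what the paper delegates to that citation.
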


\smallskip
\noindent{\em $\blacktriangleright$ Step A3: Guarantee on completeness.}
As previously discussed, we have presented a sound procedure for determining whether $\mathcal{V}(H)$ is empty.
However, there may be cases where $\mathcal{V}(H)$ is empty despite passing the aforementioned check.
To address this issue, we have developed a complete procedure that utilizes a backtracking strategy to identify solutions.

Similar to the complete decision procedure of~\cite{DBLP:conf/cav/OzdemirKTB23}, we maintain two data structures: $G$, a Gr\"obner basis and $\mathcal{M}: V\rightarrow \mathbb{Z}_{2^n}$, a (partial) map from variables to elements of ring $\mathbb{Z}_{2^n}$.
Algorithm~\ref{alg:find_zeros} presents the \emph{FindZeros} procedure for finding solutions.
$G$ is first initialised to $GB(H)$ and $\mathcal{M}$ is initialised to an empty map.

We say a polynomial $p$ is \emph{unvariate} if $p$ only contains one variable not assigned a value in $\mathcal{M}$.
If $G$ contains a univariate polynomial $p$ with only one variable $x$ that is not assigned a value, we compute the zeros of $p$ over $\mathbb{Z}_{2^n}$ (denoted by $Zeros(p)$). 
Then, we traverse each root $z\in Zeros(p)$ and iteratively assign $x$ as $z$.
For each assignment, we recursively check whether the updated Gr\"obner basis has a solution.
If each polynomial has more than two variables not assigned, we attempt to decompose a polynomial $p\in G$ into $p = f\cdot g$ through factorization over $\mathbb{Z}$. 
Since $p = 0\ (\mathrm{mod}\ 2^d)$ is equivalent to $f =0\ (\mathrm{mod}\ 2^{d-i})$ and $g= 0\ (\mathrm{mod}\ 2^i)$ for some $0\le i \le d$, we enumerate $i$ and recursively search for solutions by appending $f\cdot 2^i$ and $g\cdot 2^{d-i}$ into the Gr\"obner basis.
When neither of the two conditions mentioned above are met, we then perform an exhaustive search until a solution is found.
The soundness of the complete decision procedure is given by Theorem~\ref{thm:correctness_of_find_zeros}, whose proof is deferred to Appendix~\ref{appendix:omitted_proof_of_correctness_of_find_zeros}.
\begin{theorem}\label{thm:correctness_of_find_zeros}
Algorithm~\ref{alg:find_zeros} always terminates and returns an element of $\mathcal{V}(H)$ if $\mathcal{V}(H) \neq \emptyset$.
Otherwise, it returns $\perp$.
\end{theorem}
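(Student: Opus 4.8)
The plan is to establish three facts about Algorithm~\ref{alg:find_zeros}: \emph{soundness} (any non-$\perp$ return value is an element of $\mathcal{V}(H)$), \emph{completeness} (if $\mathcal{V}(H)\neq\emptyset$ the procedure returns such a point, equivalently a $\perp$ answer certifies $\mathcal{V}(H)=\emptyset$), and \emph{termination}. Throughout I would use that $GB(H')$ is a strong Gr\"obner basis of $\myangle{H'}$ (Theorem~\ref{prop:groeb_cond_over_gr}), so that $\myangle{GB(H')}=\myangle{H'}$ and membership in $\myangle{H'}$ is decided by strong reduction. The core of soundness and completeness is an invariant carried along every recursive call $\mathit{FindZeros}(H',\mathcal{M}')$: $\mathcal{M}'$ is a partial assignment, $\mathcal{V}(H')=\mathcal{V}(H)\cap W(\mathcal{M}')$ where $W(\mathcal{M}')$ is the set of points agreeing with $\mathcal{M}'$ on $\mathrm{dom}(\mathcal{M}')$, and $x-\mathcal{M}'(x)\in\myangle{H'}$ for every assigned variable $x$. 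I would verify this invariant by induction along the recursion, one case per branch. In the univariate and exhaustive branches the recursive call adds a linear polynomial $x-z$ and the entry $x\mapsto z$, so the invariant is extended directly; the key point is that $p\in\myangle{H'}$ (since $p\in G$) forces every common root of $H'$ to have $x$-coordinate in $\mathit{Zeros}(p)$, so enumerating only $\mathit{Zeros}(p)$ loses no root and an empty $\mathit{Zeros}(p)$ correctly witnesses $\mathcal{V}(H')=\emptyset$. For the factorization branch I would use the $\mathbb{Z}_{2^d}$-fact that $p(\bm v)=f(\bm v)\,g(\bm v)\equiv 0$ exactly when $\nu_2(f(\bm v))+\nu_2(g(\bm v))\ge d$, which yields the decomposition $\mathcal{V}(\{p\})=\bigcup_{0\le i\le d}\mathcal{V}(\{2^if,2^{d-i}g\})$ and simultaneously shows that each recursive constraint set only shrinks the variety while keeping it inside $\mathcal{V}(\{p\})$ (dropping $p$ is harmless since $2^if\cdot g=2^ip$ and $2^{d-i}g\cdot f=2^{d-i}p$); preserving the clause $x-\mathcal{M}'(x)\in\myangle{H'}$ here uses that in the factorization branch no element of $G$ is univariate (otherwise the univariate branch fires), which constrains the linear elements of $G$ enough to re-derive each $x-\mathcal{M}'(x)$ from $G\setminus\{p\}$ together with the already-derived $x'-\mathcal{M}'(x')$.

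With the invariant available, soundness reduces to the leaf case $\abs{\mathcal{M}'}=n$. There $W(\mathcal{M}')$ is the single point $\bm v=\myangle{\mathcal{M}'(x_1),\dots,\mathcal{M}'(x_n)}$; since $\myangle{H'}\supseteq\myangle{x_1-\bm v_1,\dots,x_n-\bm v_n}$, reducing modulo the polynomials $x_i-\bm v_i$ shows that the constant polynomials lying in $\myangle{H'}$ form exactly the $\mathbb{Z}_{2^d}$-ideal generated by $\{h(\bm v):h\in H'\}$. By Theorem~\ref{thm:const} this ideal is nonzero iff $GB(H')$ contains a nonzero constant, so passing the ``no constant in $G$'' test is equivalent to $h(\bm v)=0$ for all $h\in H'$, i.e.\ $\bm v\in\mathcal{V}(H')\subseteq\mathcal{V}(H)$, which is exactly what is returned. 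Completeness then follows by induction over the recursion tree (finite, by termination): each internal node has $\mathcal{V}(H')$ equal to the union of its children's varieties, so if $\mathcal{V}(H)\neq\emptyset$ the recursion reaches a leaf whose variety is nonempty and which returns a point, which by soundness belongs to $\mathcal{V}(H)$; conversely, by contraposition a $\perp$ answer forces $\mathcal{V}(H)=\emptyset$, with Proposition~\ref{prop:empvariety} covering the ``constant in $G$'' exit and the root-enumeration argument above covering the empty-$\mathit{Zeros}(p)$ exit.

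For termination I would exhibit a well-founded ranking on calls, ordered lexicographically by the pair $\big(\,n-\abs{\mathcal{M}'},\ \mu(GB(H'))\,\big)$. The univariate and exhaustive branches both enlarge $\mathcal{M}'$, so they strictly decrease the first component, which is bounded below by $0$; hence each path contains only finitely many such steps. The secondary measure $\mu$ must therefore strictly decrease across every factorization step, which does not change $\mathcal{M}'$. The natural candidate for $\mu$ is degree-based: for instance the multiset of total degrees of those Gr\"obner basis elements that admit a nontrivial factorization over $\mathbb{Z}$, compared under the Dershowitz--Manna multiset order, since factoring $p=f\cdot g$ deletes an element of degree $\deg p\ge 2$ and inserts scalar multiples of $f$ and $g$, whose $\mathbb{Z}$-degrees sum to $\deg p$ and are hence strictly smaller.

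I expect this last step to be the main obstacle. The difficulty is that after $p$ is replaced by $2^if$ and $2^{d-i}g$ the algorithm recomputes a Gr\"obner basis, and such a recomputation can in general raise degrees, so one must rule out its re-creating factorable polynomials large enough to spoil $\mu$. The plan to close the gap is to combine the degree bookkeeping with the finiteness of $\mathbb{Z}_{2^d}^n$: along any path the varieties $\mathcal{V}(H')$ form a non-increasing sequence of finite sets, which can strictly decrease only finitely often, so it suffices to show that a factorization step leaving $\mathcal{V}(H')$ unchanged strictly decreases a secondary measure that is provably not increased by the ensuing Gr\"obner basis recomputation. Pinning down such a measure --- one capturing that a factorization genuinely splits a polynomial into lower-degree pieces in a way the recomputation cannot undo without either altering the variety or assigning a fresh variable --- is the technical heart of the proof, and I would isolate it as a separate lemma before assembling the termination argument.
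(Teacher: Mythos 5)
Your soundness and completeness argument follows essentially the same route as the paper's: each branch of Algorithm~\ref{alg:find_zeros} covers the variety of the current constraint set (enumerating $Zeros(p)$ suffices because $p\in\myangle{G}$; the factorization branch is justified by $p(\bm{v})=f(\bm{v})g(\bm{v})\equiv 0 \bmod 2^d$ exactly when $\nu_2(f(\bm{v}))+\nu_2(g(\bm{v}))\ge d$; the last branch is exhaustive), and a non-$\perp$ leaf is certified through the constant test via Theorem~\ref{thm:const} and Proposition~\ref{prop:empvariety}. You actually carry this out more carefully than the paper, whose proof of this half is a few sentences and leaves the leaf certificate (your clause $x-\mathcal{M}(x)\in\myangle{H'}$) implicit; note, though, that your own maintenance of that clause across the factorization branch, where $p$ is removed from $G$, is only sketched. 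Incidentally, your union over $0\le i\le d$ is the mathematically correct decomposition, whereas the loop bound $d-1$ as printed omits the case where $f(\bm{v})$ is a unit and $g(\bm{v})\equiv 0$.

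The genuine gap is the one you flag yourself: termination of the factorization branch is never established, and the second component of your lexicographic ranking is left as an unproved lemma, so the termination half of the theorem is not proved by your proposal. The paper's own proof disposes of this step with a much more elementary, one-sentence argument: a factorization step replaces $p$ by scalar multiples of strictly lower-degree factors, a polynomial admits only finitely many successive splittings over $\mathbb{Z}$ into positive-degree factors, and each splitting is tried for only $d$ values of $i$; combined with the strict decrease of the number of unassigned variables in the univariate and exhaustive branches and the finiteness of $\mathbb{Z}_{2^d}$, every recursion path is finite. That is, the paper measures progress by the degrees of the polynomials being factored and simply does not engage with the phenomenon you worry about --- that the recomputation $GB(G\cup\{2^if,2^{d-i}g\}\setminus\{p\})$ inside the recursive call might manufacture new factorable, high-degree basis elements; it implicitly treats Line~\ref{line:factorization} as walking the finitely branching, degree-decreasing factorization tree of the existing polynomials. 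So the step you identified as the technical heart and could not close is precisely the step the paper settles (tersely) by this degree argument; to match the theorem you would either need to adopt that simpler bookkeeping or prove the stability lemma you describe.
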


\subsection{Algorithmic Improvement for Multiplicative Inverse}
In the computation of a strong Gr\"{o}bner basis, a key bottleneck is the division operation (Line~\ref{line:division_of_normal_form} of NF in Algorithm~\ref{alg:compute_groebner_bases}).
Given two integers $a, b\in \mathbb{Z}_{2^n}$, let $a = 2^{\nu_2(a)}\cdot s_a$ and $b= 2^{\nu_2(b)}\cdot s_b$, where $s_a,s_b$ are both odd integers.
According to Proposition~\ref{prop:odd_inverse}, there exists an unique integer $t$ such that $t\cdot s_b = 1\ (\mathrm{mod}\ 2^d)$.
We denote $t$ by $s_b^{-1}$ and implement the division operation  as $\frac{a}{b} = 2^{\nu_2(a) - \nu_2(b)}\cdot s_a \cdot s_b^{-1}$.

There are various ways to calculate the inverse of an odd integer $a$ in $\mathbb{Z}_{2^d}$.
A simple method is via the classical extended Euclidean algorithm~\cite[Chapter 2]{elemnumber} that finds integer coefficients $k_1,k_2$ such that $k_1\cdot a+k_2\cdot {2^d}=1$. 
The extended Euclidean algorithm requires $O(d)$ arithmetic operations. 
A significant improvement is via Hensel’s lifting~\cite{krishnamurthy1983fast} that requires only $O(\log d)$ arithmetic operations.
Let $a = 2^s\cdot a' + 1$, where $a'$ is an odd integer.
Prominent implementations of Hensel’s lifting include Arazi and Qi's algorithm~\cite{arazi2008calculating} (that requires $13\lfloor \log d \rfloor + 1$ arithmetic operations) and Dumma's algorithm~\cite{dumas2013newton} (that requires $5\lfloor \log (\frac{d}{s}) \rfloor + 2$ arithmetic operations).
They adopt constructive methods through recursive formulas.
In particular, Dumma's algorithm calculates the inverse of $a$ through iteratively calculating $(2-a)\prod_{i=1}^{n-1}(1+ (a-1)^{2^i})$.

A limitation of methods using Hensel's lifting is constructing the inverse from scratch, even if $a$ is small.
For example, when $a=3$, Dumma's algorithm will iteratively calculate $-\prod_{i=1}^{n-1}(1+ 2^{2^i})$ while $3^{-1}$ could be simply given by $\frac{2^{d+1} +1}{3}$.
The cost of construction will be non-negligible when $2^d$ is extremely large.

\begin{wrapfigure}[9]{L}{0.5\textwidth} 
\begin{minipage}{0.5\textwidth}
\vspace{-7mm}
\begin{algorithm}[H]
\caption{Finding the multiplicative inverse of $a$ over $\mathbb{Z}_{2^d}$ when $a$ is small.}
\label{alg:find_inverse}
\begin{algorithmic}[1]
\State $r\leftarrow 2^d \mod a$\;\label{line:2_d_mod_a}
\State Implement $f\leftarrow \frac{2^d - r}{a}$ by $f\leftarrow (\frac1a)\gg d$ \label{line:div_operation}
\State Apply extended Euclidean algorithm to find $k_1, k_2$, such that $k_1\cdot r + k_2\cdot a = -1$\label{line:extended_euclidean}
\State \Return{$(k_1\cdot f- k_2)\mod 2^d$} \label{line:return_final_res}
\end{algorithmic}
\end{algorithm}
\end{minipage}
\end{wrapfigure}

To improve this, our observation is that the arithmetic operations over $\mathbb{Z}_a$ will be cheaper when $a$ is small compared to $2^d$.
Since finding the multiplicative inverse of $a$ is equivalent to finding $k$ such that $\frac{k\cdot 2^d + 1}{a}$ is an integer, it could be achieved by finding the inverse of $2^d$ over $\mathbb{Z}_a$.

Based on this insight, we adopt Algorithm~\ref{alg:find_inverse} to find the inverse of $a$ when $a$ is small.
Its detailed description is deferred to Appendix~\ref{appendix:proof_of_complexity_of_mul_inverse}.
Here we present a high-level description.
First, it computes the remainder of $2^d$ modulo $a$, which can be implemented by the modular exponentiation algorithm~\cite{schneier2007applied}.
Then, we use Newton's method and a shifting operation to calculate the result of $a$ dividing $2^d-r$.
Next, we invoke the extended Euclidean algorithm to find integer $k_1, k_2$ such that $k_1\cdot r+ k_2\cdot a = -1$.
Finally, we return $k_1\cdot f - k_2$ as the inverse of $a$, whose correctness is given by
$$
a\cdot (k_1\cdot f - k_2) = k_1 \left(2^d- r\right) - k_2\cdot a \equiv 1\ (\mathrm{mod}\ 2^d)\,.
$$

It can observed that most arithmetic operations (Line~\ref{line:2_d_mod_a} and~\ref{line:extended_euclidean}) in Algorithm~\ref{alg:find_inverse} are over $\mathbb{Z}_a$. 
Hence, when $a$ is small compared to $2^d$, the number of binary operations over $\mathbb{Z}_a$ will also be small compared to that over $\mathbb{Z}_{2^d}$.
Assume the classical multiplication costs $\mathcal{O}(2d^2)$ binary operations.
A division operation $\frac{b}{a}$ could be implemented by pre-computing $\frac1a$ and then performing one multiplication $b\cdot (\frac{1}{a})$ as~\cite{granlund1994division}.
Formally, we provide the following lemma, which estimates the number of arithmetic and binary operations.
Notably, our method outperforms those of~\cite{arazi2008calculating,dumas2013newton} when $a$ is much smaller than $2^d$.
Its proof is deferred to Appendix~\ref{appendix:proof_of_complexity_of_mul_inverse}.
\begin{theorem}\label{thm:complexity_of_mul_inverse}
Algorithm~\ref{alg:find_inverse} requires $\mathcal{O}\left(2\log d + 8\log a \right)$ arithmetic operations and $\mathcal{O}(2d^2 + 4d + 8 a^2\cdot \log a + 4a\cdot \log a)$ binary operations.
\end{theorem}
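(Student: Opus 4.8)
The plan is to analyze Algorithm~\ref{alg:find_inverse} line by line, count the arithmetic operations on each line, and then separately account for the binary (bit-level) cost of each arithmetic operation by distinguishing those performed over the small ring $\mathbb{Z}_a$ from those performed over $\mathbb{Z}_{2^d}$. First I would handle Line~\ref{line:2_d_mod_a}: computing $2^d \bmod a$ via repeated squaring (modular exponentiation) takes $\mathcal{O}(\log d)$ multiplications, and since every intermediate value is reduced modulo $a$, each such multiplication costs $\mathcal{O}(2a^2)$ binary operations plus a reduction of cost $\mathcal{O}(a \log a)$ (dividing a $2\log a$-bit number by an $\log a$-bit number). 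Next, Line~\ref{line:div_operation} computes $f = (2^d-r)/a$; I would argue this is implemented by precomputing $1/a$ (a $d$-bit reciprocal, obtainable by Newton iteration in $\mathcal{O}(\log d)$ steps each costing $\mathcal{O}(2d^2)$ binary ops, but amortized/bounded as stated) followed by one $d$-bit multiplication and a shift by $d$, contributing the $\mathcal{O}(2d^2 + 4d)$ term and the remaining $\mathcal{O}(\log d)$ arithmetic operations. Then Line~\ref{line:extended_euclidean} runs the extended Euclidean algorithm on the pair $(r,a)$ with $r < a$, which takes $\mathcal{O}(\log a)$ iterations, each iteration doing a constant number of arithmetic operations on numbers of size $\mathcal{O}(\log a)$ bits, hence $\mathcal{O}(a^2 \log a)$ binary operations (or $\mathcal{O}(a \log a)$ with schoolbook division per step, matching the stated lower-order term). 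Finally Line~\ref{line:return_final_res} does $\mathcal{O}(1)$ multiplications/additions modulo $2^d$.

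Summing the arithmetic-operation counts: Lines~\ref{line:2_d_mod_a} and~\ref{line:div_operation} together contribute $\mathcal{O}(\log d)$ (I would track the constant to get the leading $2\log d$), while Lines~\ref{line:extended_euclidean} and~\ref{line:return_final_res} contribute $\mathcal{O}(\log a)$ (tracked to $8\log a$). This yields the claimed $\mathcal{O}(2\log d + 8\log a)$ arithmetic operations. For the binary count, the two genuinely $d$-bit operations are the reciprocal/multiplication on Line~\ref{line:div_operation}, giving $\mathcal{O}(2d^2 + 4d)$; everything on Lines~\ref{line:2_d_mod_a} and~\ref{line:extended_euclidean} operates on $\mathcal{O}(\log a)$-bit operands, and accumulating $\mathcal{O}(\log d) + \mathcal{O}(\log a)$ such operations each of cost $\mathcal{O}(a^2)$ (and the cheaper $\mathcal{O}(a\log a)$ reductions) gives $\mathcal{O}(8 a^2 \log a + 4 a \log a)$. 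Adding the pieces produces the stated bound $\mathcal{O}(2d^2 + 4d + 8a^2\log a + 4a\log a)$.

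The main obstacle I expect is pinning down the precise constants, especially the $8\log a$ and $8a^2\log a$ factors: this requires being careful about exactly how many multiplications the extended Euclidean algorithm and the final reconstruction perform per loop iteration, and about whether each "arithmetic operation" over $\mathbb{Z}_a$ is charged at $\mathcal{O}(2a^2)$ (schoolbook multiply of two $\log a$-bit values, reusing the $\mathcal{O}(2d^2)$ convention from the paper with $d$ replaced by $\log a$... here one must be careful since operands have $\mathcal{O}(\log a)$ bits, so a multiply is $\mathcal{O}(\log^2 a)$, and it is the number of loop iterations, $\mathcal{O}(a)$ in the worst case of the subtractive Euclidean variant, or $\mathcal{O}(\log a)$ with full division, that must be reconciled with the stated $a^2$). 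I would resolve this by fixing one concrete variant of each subroutine, stating its per-step cost explicitly, and verifying the arithmetic so the totals collapse exactly to the claimed expressions; the correctness identity $a(k_1 f - k_2) = k_1(2^d - r) - k_2 a \equiv 1 \pmod{2^d}$ already given in the text discharges the correctness obligation, so only the complexity bookkeeping remains.
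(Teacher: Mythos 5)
There is a genuine gap, and it sits exactly where you flagged your own discomfort: the origin of the $8a^2\log a$ term and the cost of Line~\ref{line:div_operation}. The paper does \emph{not} compute a full $d$-bit reciprocal of $a$ by Newton iteration. Instead it exploits the fact that, since $a$ is odd, the binary expansion of $\frac{1}{a}$ is purely periodic with repetend length $\ell\le a-1$; so Newton's method (with $x_{n+1}=x_n(2-ax_n)$, quadratic convergence) is run only to $2a$ bits of precision, which takes $\log a+1$ iterations, each consisting of two multiplications and one subtraction on $O(a)$-bit operands, i.e.\ $O(2\cdot(2a)^2)$ binary operations per multiplication. That is the source of the $8a^2\log a$ (and $4a\log a$) terms: they come from Newton iteration on $2a$-bit numbers, not from the $\mathbb{Z}_a$-arithmetic in Lines~\ref{line:2_d_mod_a} and~\ref{line:extended_euclidean} (those operate on $O(\log a)$-bit operands and cost only $\mathrm{polylog}(a)$ binary operations, exactly the tension you noticed but left unresolved). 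After the $2a$-bit prefix is found, the repetend is detected (costing $O(a^2)$ binary operations, no arithmetic operations) and the expansion is extended periodically to $d$ bits and shifted, costing only $O(d)$ binary operations. Consequently the $2d^2$ term is contributed solely by the final $d$-bit multiplication in Line~\ref{line:return_final_res}, not by Line~\ref{line:div_operation} as in your accounting.

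Your version of Line~\ref{line:div_operation} --- a $d$-bit Newton reciprocal in $O(\log d)$ steps ``each costing $O(2d^2)$ binary ops, but amortized/bounded as stated'' --- does not yield the claimed bound: at full precision that is $O(d^2\log d)$ binary operations, and no amortization argument is supplied (the standard doubling-precision trick is neither invoked by you nor used by the paper). Likewise, attributing $O(a^2\log a)$ to the extended Euclidean algorithm is incorrect; with $r<a$ it runs in $O(\log a)$ rounds of five arithmetic operations on $\le\log a$-bit operands, giving the $5\log a$ contribution to the arithmetic count and only $O(\log a\cdot\log^2 a)$ binary operations. Your arithmetic-operation total and the correctness identity are fine, but the binary-operation bound --- the substantive half of the theorem --- cannot be recovered from your decomposition without the periodic-repetend idea.
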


\section{Loop Invariant Generation}
\label{sec:loop_invariant}

In this section, we introduce a novel approach for generating polynomial equational invariants over bit-vectors.
Invariant generation aims at solving predicates that over-approximate the set of reachable program states and can be viewed as an important case of quantified SMT solving~\cite{DBLP:conf/kbse/YaoKSFWR23}. 
We solve the invariant by parametrically reducing a polynomial invariant template with unknown coefficients with respect to the strong Gr\"{o}bner basis of transitions.
Below we first present the invariant generation problem and the connection with SMT. Then we present the details of our approach. 

\subsection{Inductive Loop Invariants}
\label{sec:intro_to_linear_inv}

We consider polynomial while loops in which addition and multiplication are considered modulo $2^d$ where $d$ is the size of a bit-vector.
Fix a finite set $V=\{x_1,\dots, x_n\}$ of program variables and let $V=\{x'\mid x\in V\}$ be the set of primed variables of $V$. A program variable $x$ in $V$ is used to represent the current value of the variable, while its primed counterpart $x'\in V'$ is to represent the next value of the variable $x$ after the current loop iteration. An \emph{equational polynomial assertion} is a conjunction of polynomial equations over variables in $V, V'$, i.e., a formula of the form $\bigwedge_{i} p_i(x_1,x_1',\dots, x_n,x'_n)= 0$ where each $p_i$ is a polynomial with variables from $V,V'$. 
More generally, a \emph{polynomial assertion} includes polynomial inequalities, i.e.,  $\bigwedge_{i} p_i(x_1,x_1',\dots, x_n,x'_n) \Join q_i(x_1,x_1',\dots, x_n,x'_n)$, where $\Join \in \{=,\neq, \le, \ge, >, < \}$ and $p_i, q_i$ are polynomials with variables from $V,V'$.

We say that a polynomial assertion is in $V$ if it only involves variables from $V$ and in $V, V'$ generally. 
A polynomial while loop takes the form
\begin{equation}\label{eq:polynomialloop}
\textbf{assume}\  \theta(V);\,\textbf{while}\,(c(V))\,\{\rho(V, V');\}; \textbf{assert}(\kappa(V)) 
\end{equation}
where (a) $\theta(V)$ is a polynomial assertion in $V$ and specifies the initial condition (or precondition) for program inputs, (b) $c(V)$ is a polynomial assertion in $V$ and acts as the loop condition, (c) $\rho(V, V')$ is an equational polynomial assertion in $V, V'$ that specifies the relationship between the current values of $V$ and the next values of $V'$, and (d) $\kappa(V)$ is a polynomial assertion in $V$ that acts as the postcondition at the termination of the loop. 

A \emph{program state} is a function $\bm{v}:V\rightarrow \mathbb{Z}_{2^d}$ that specifies the current value $\bm{v}(x)$ for every variable $x\in V$, while a \emph{primed state} is a function $\bm{v}':V'\rightarrow  \mathbb{Z}_{2^d}$ that specifies the next value $\bm{v}'(x')$ of each variable $x$ after one loop iteration. 
Given a program state $\bm{v}$ and a polynomial assertion $\phi$ in $V$, we write $\bm{v} \models \phi$ if $\phi$ is true when each variable $x\in V$ is instantiated by its current value $\bm{v}(x)$ in $\phi$. Moreover, given a program state $\bm{v}$, a primed state $\bm{v}'$ and a polynomial assertion $\phi$ in $V,V'$, we write $\bm{v}, \bm{v}' \models \phi$ if $\phi$ is true when instantiating each variable $x\in V$ with the value $\bm{v}(x)$ and each variable $x'\in V$ with the value $\bm{v}'(x')$ in $\phi$.  

The semantics of a polynomial while loop in the form of (\ref{eq:polynomialloop}) is given by its (finite) executions. A (finite) \emph{execution} of the loop is a finite sequence $\bm{v}_0,\bm{v}_1,\dots, \bm{v}_n$ of program states such that $\bm{v}_0\models \theta(V)$ and for every $0\le k< n$ we have $\bm{v}_k,\bm{v}'_{k+1}\models\rho(V,V')$, where the primed state $\bm{v}'_{k+1}$ is defined by $\bm{v}'_{k+1}(x'):=\bm{v}_{k+1}(x)$ for each $x\in V$. A program state $\bm{v}$ is \emph{reachable} if it appears in some execution of the loop. We consider polynomial invariants, which are equational polynomial assertions $\phi$ in $V$ such that for all reachable program states $\bm{v}$, we have $\bm{v}\models \phi$. 

In this work, we consider inductive polynomial invariants that are inductive invariants in the form of polynomial equations. An \emph{inductive polynomial invariant} for a polynomial while loop in the form of (\ref{eq:polynomialloop})  is an equational polynomial assertion $\phi$ in $V$ that satisfies the initiation and consecution conditions as follows:
\begin{itemize}
    \item (\textbf{Initiation}) 
    for any program state $\bm{v}$, $\bm{v} \models \theta(V)$ implies $\bm{v} \models \phi$. 
    \item (\textbf{Consecution}) for any program state $\bm{v}$ and primed state $\bm{v}'$, we have that  
    \[
    \left[(\bm{v} \models c(V) \wedge \phi)\wedge (\bm{v}, \bm{v}' \models \rho(V, V'))\right] \Rightarrow\bm{v}'\models \phi[x'_1/x_1,\dots, x'_n/x_n].
    \]
\end{itemize}
By a straightforward induction on the length of an execution, one easily observes that every inductive polynomial invariant is an invariant that over-approximates the set of reachable program states of a polynomial while loop. An inductive polynomial invariant $\phi$ \emph{verifies} the post condition $\kappa(V)$ if for all program states $\bm{v}\models \phi$, we have that $\bm{v}\models \kappa(V)$. Moreover, the invariant $\phi$ \emph{refutes} the post condition if $\phi\wedge\kappa(V)$ is unsatisfiable.  

We consider the automated synthesis of inductive polynomial invariants: given an input polynomial while loop in the form of (\ref{eq:polynomialloop}), generate an inductive polynomial invariant for the loop that suffices to verify the postcondition of the loop. 
The problem is a special case of SMT solving in the CHC (constraint Horn clauses) form~\cite{DBLP:conf/sas/BjornerMR13}. The corresponding CHC constraints are as follows:
\begin{align}\label{eq:chc_constraint}
\mbox{\textbf{Initiation:}}\, &\, \theta(V)\Rightarrow \phi \nonumber \\ 
\mbox{\textbf{Consecution:}}\, &\,  \left[(c(V) \wedge \phi)\wedge \rho(V, V')\right] \Rightarrow \phi[x'_1/x_1,\dots, x'_n/x_n] \nonumber \\
\mbox{\textbf{Verification:}}\, &\,  
(\phi \wedge \neg c(V)) \Rightarrow \kappa(V)\enskip\enskip\textbf{Refutation:}\,(\phi \wedge \neg c(V)) \Rightarrow \neg \kappa(V)
\end{align}
where the task is to solve a formula $\phi$ that fulfills the constraints above, for which we use the verification condition to verify the postcondition and the refutation condition to refute the postcondition. One directly observes that any solution of $\phi$ w.r.t the CHC constraints is an inductive polynomial invariant for the loop.

\begin{example}
\label{example:smt_solving_for_inv}
Consider the following program, where $x,y$ are two $32$-bit unsigned integers and initialized as $1$ and $9$, respectively, and incremented by $1$ in each iteration.
We want to verify that $y-x < 10$ when the loop program terminates.
According to the definition of polynomial inductive loop invariant, the value of $y - x$ is fixed during the loop. 
Since its value is initialized as $8$, $y - x= 8\ (\mathrm{mod}\ 2^{32})$ is a polynomial loop invariant.
Further, since $(y = 0) \wedge (y - x = 8\ (\mathrm{mod}\ 2^{32}))$ implies $y - x<10$, the postcondition is satisfied.
\begin{center}
\begin{minipage}{0.18 \textwidth}
\begin{verbatim}
x = 1, y = 9;
while (y != 0) {
    x = x + 1;
    y = y + 1;
}
@assert(y - x < 10);
\end{verbatim}
\end{minipage}
{
\footnotesize
$\xrightarrow{\substack{\text{Embed into}\\ \text{quantified formulas}}}$
}
\begin{minipage}{0.12\textwidth}
\begin{align*}
&\textbf{assert}(\forall x, y. (x=1\wedge y=9)\rightarrow inv(x,y))\\
&\textbf{assert}(\forall x, y, x', y'. (y\neq 0 \wedge inv(x,y)) \rightarrow \\
&\qquad (x' = x+ 1 ) \wedge (y' = y + 1) \wedge inv(x', y'))\\ 
&\textbf{assert}(\forall x, y. (inv(x,y) \wedge (y = 0)) \rightarrow (y - x < 10))\\
\end{align*}
\end{minipage}
\end{center}
\end{example}

\subsection{Polynomial Invariants Generation over Bit-Vectors}
\label{sec:poly_loop_inv}

Below we present our approach for synthesizing polynomial equational loop invariants over bit-vectors. 
The high-level idea is first to have a polynomial template with unknown coefficients as parameters, then to establish linear congruence equations for these unknown coefficients via strong Gr\"{o}bner bases, and finally to solve the linear congruence equations to get polynomial invariants. 

Let $P$ be an input polynomial while loop in the form of (\ref{eq:polynomialloop}) and $d$ be the size of bit-vectors. 
Let $k$ be an extra algorithmic input that specifies the maximum degree of the target polynomial invariant and $M^{(k)}[V]$ be the set of all monomials in $V$ with degree $\le k$. Our approach is divided into three steps as follows. 

\smallskip
\noindent{\em $\blacktriangleright$ Step B1: Polynomial templates.}
Our approach first sets up a polynomial template $\eta = 0$ for the desired invariant, where
$\eta$ is the polynomial 
$
\textstyle\eta = \sum_{q\in {M^{(k)}}[V]}\lambda_q\cdot q
$
with the unknown coefficients $\lambda_q$ for every monomial $q\in M^{(k)}[V]$.
In particular, the coefficient of $q= \bm{x}^{\bm{0}}$ is also denoted as $\xi$.
For example, when $V=\{x,y\}$, and $k=2$, template $\eta$ contains monomials with degree $\le 2$, i.e.,  
$
\eta = \lambda_1\cdot x^2 + \lambda_2\cdot xy + \lambda_3 \cdot y^2 + \lambda_4 \cdot x + \lambda_5 \cdot y + \xi
$.
Denote $\bm{\lambda}$ as the vector of all these coefficients.
Note that the template enumerates all monomials with degrees no more than $k$. 
We denote $\eta':=\eta[x'_1/x_1,\dots,x'_n/x_n]$. 
We obtain the following constraints, which specify that the template is an invariant:
\begin{align}\label{eq:chc_constraint_template}
\mbox{\textbf{Initiation:}}\, &\, \theta(V)\Rightarrow \left(\eta = 0\right) \nonumber \\
\mbox{\textbf{Consecution:}}\, &\,  \left[c(V) \wedge (\eta = 0) \wedge \rho(V, V')\right] \Rightarrow (\eta'=0)  
\end{align}

\noindent
\smallskip{\em $\blacktriangleright$ Step B2: Reduction of strong Gr\"{o}bner bases.} Then, our approach derives a system of linear congruence equations for the unknown coefficients in the template with respect to the constraints in (\ref{eq:chc_constraint_template}). 
To derive a sound and complete characterization for the unknown coefficients is intractable as it requires multivariate nonlinear reasoning of integers with modular arithmetics. Henceforth, we resort to incomplete sound conditions. We derive sound conditions from the proposition below, whose correctness follows directly from the definition of ideals.

\begin{proposition}
\label{prop:sufficient_cond_for_inv}
Let $F_\rho$ be the set of left-hand-side polynomials of assertion $\rho(V, V')$. 
The consecution condition $\left[c(V) \wedge (\eta = 0) \wedge \rho(V, V') \right]\Rightarrow (\eta'=0)$ holds if $\eta' - \mu\cdot \eta \in \myangle{F_\rho}$ for some constant $\mu \in \mathbb{Z}_{2^d}$. 
\end{proposition}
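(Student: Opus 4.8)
The plan is to show that the ideal‑membership condition $\eta' - \mu\cdot\eta \in \langle F_\rho\rangle$ is strong enough to force $\eta'$ to vanish whenever the left‑hand side of the consecution constraint holds, so that the displayed implication is valid regardless of the loop condition $c(V)$ and the invariant hypothesis $\eta = 0$ (in fact only $\eta = 0$ and $\rho$ are needed). First I would fix an arbitrary program state $\bm{v}$ and primed state $\bm{v}'$ satisfying the antecedent $c(V)\wedge(\eta = 0)\wedge\rho(V,V')$. The relevant facts extracted from this antecedent are: every polynomial $p\in F_\rho$ evaluates to $0$ at $(\bm{v},\bm{v}')$ (because $\rho(V,V')$ is by definition the conjunction $\bigwedge_i p_i = 0$ with $F_\rho = \{p_i\}$), and $\eta(\bm{v}) = 0$.

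Next I would unwind the ideal membership. By the definition of $\langle F_\rho\rangle$ in the ring $\mathbb{Z}_{2^d}[\bm{x},\bm{x}']$, the hypothesis $\eta' - \mu\cdot\eta \in \langle F_\rho\rangle$ means there are polynomials $r_1,\dots,r_m \in \mathbb{Z}_{2^d}[\bm{x},\bm{x}']$ with
\[
\eta' - \mu\cdot\eta = \sum_{i=1}^m r_i\cdot p_i,\qquad p_i\in F_\rho .
\]
Evaluating this identity at $(\bm{v},\bm{v}')$ and using $p_i(\bm{v},\bm{v}') = 0$ for all $i$ gives $\eta'(\bm{v}') - \mu\cdot\eta(\bm{v}) = 0$ in $\mathbb{Z}_{2^d}$. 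Combining with $\eta(\bm{v}) = 0$ yields $\eta'(\bm{v}') = \mu\cdot 0 = 0$, which is exactly $\bm{v}'\models(\eta' = 0)$. Since $\bm{v},\bm{v}'$ were arbitrary, the implication $[c(V)\wedge(\eta=0)\wedge\rho(V,V')]\Rightarrow(\eta'=0)$ holds, establishing the proposition.

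There is essentially no obstacle here: the argument is the standard "evaluate the ideal‑membership witness at a common root" move, and the only points requiring a little care are (i) making explicit that $F_\rho$ is precisely the set of polynomials whose vanishing is asserted by $\rho(V,V')$, so that any state satisfying $\rho$ lies in $\mathcal{V}(F_\rho)$, and (ii) noting that $\eta' = \eta[x'_1/x_1,\dots,x'_n/x_n]$ so that evaluating $\eta'$ at $\bm{v}'$ coincides with evaluating $\eta$ at the program state obtained from $\bm{v}'$, matching the definition of $\bm{v}'\models\phi[x'_1/x_1,\dots,x'_n/x_n]$. The constant $\mu$ plays only a passive role — it could be any ring element — because it is multiplied by $\eta(\bm{v}) = 0$; its presence merely widens the class of templates for which the sufficient condition can succeed. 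I would also remark that the converse need not hold, which is why the proposition gives only a sound (incomplete) condition, consistent with the discussion preceding it.
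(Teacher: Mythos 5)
Your proof is correct and is exactly the argument the paper has in mind — the paper omits a written proof, saying the proposition "follows directly from the definition of ideals," and your evaluation of the ideal-membership witness at a state satisfying $\rho$ together with $\eta(\bm{v})=0$ is precisely that direct argument.
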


Proposition~\ref{prop:sufficient_cond_for_inv} provides the sound condition $\exists\mu\in \mathbb{Z}_{2^d}.(\eta' - \mu\cdot \eta \in \myangle{F_\rho})$ for the consecution condition in (\ref{eq:chc_constraint_template}). We employ several heuristics to further simplify the condition. 
The first heuristics is to enumerate small values of $\mu$ (e.g., $\mu\in \{-1,0, 1\}$) that reflect common patterns in invariant generation. For example, the case $\mu=0$ corresponds to \emph{local} invariants, and $\mu=1$ corresponds to \emph{incremental} invariants (see \cite{DBLP:conf/popl/SankaranarayananSM04,DBLP:conf/sas/SankaranarayananSM04} for details). Then, the condition is soundly reduced to checking $\eta' - \mu\cdot \eta \in \myangle{F_\rho}$ for specific values of $\mu$. 

To check whether $\eta' - \mu\cdot \eta \in \myangle{F_\rho}$ or not, a direct method is to reduce the polynomial $\eta' - \mu\cdot \eta$ with respect to the strong Gr\"{o}bner basis of the ideal $\myangle{F_\rho}$. 
However, this would cause the combinatorial explosion as one needs to maintain the information of the unknown coefficients from the template in the reduction of strong Gr\"{o}bner bases, which is already the case in the simpler situation of real numbers (that constitute a field and do not involve modular arithmetics) \cite{DBLP:conf/popl/SankaranarayananSM04}. Therefore, we consider heuristics via a normal-form reduction as follows.

\smallskip
\noindent{\em Parametric Normal Form.}
Below we define an operation $\mathrm{PNF}$ of parametric normal form
for reducing a polynomial $f$ whose coefficients are polynomials in the unknown coefficients of our template against a finite subset of $\mathbb{Z}_{2^d}[V]$. 

\begin{definition}[Parametric Normal Form]
\label{def:param_normal_form}
Let $\mathcal{L} = \mathbb{Z}_{2^d}[\bm{\lambda}]$, $\mathcal{F}=\mathcal{L}[V]$ and $\mathcal{G}$ be the set of finite subsets of $\mathbb{Z}_{2^d}[V]$.
A map $\mathrm{PNF}: \mathcal{F}\times \mathcal{G}\rightarrow \mathcal{F}, (f, G)\mapsto \mathrm{PNF}(f\ |\ G)$ is called a \emph{parametric normal form} if for any $f\in \mathcal{F}$ and $G\in \mathcal{G}$, 
\begin{enumerate}
    \item $\mathrm{PNF}(0 \ |\ G) = 0$, 
    \item $\mathrm{PNF}(f \ |\ G) \neq 0$ implies $\lt(\mathrm{PNF}(f \ |\ G)) \notin \myangle{\{\lt(g)\}}_{\mathcal{L}[V]}$ for any $g\in G$, and
    \item $r:= f - \mathrm{PNF}(f \ |\ G)$ belongs to $\myangle{G}_{\mathcal{L}[V]}$.
\end{enumerate}
\end{definition}

In Algorithm~\ref{algo:param_normal_form_calc} (deferred to Appendix~\ref{appendix:param_normal_form_calc}), we show an instance of parametric normal form and its calculation.
Similar to Algorithm~\ref{alg:compute_groebner_bases}, at each time we attempt to find polynomial $g\in G$ satisfying its leading monomial divides the monomial of a term $t=c_t\cdot m_t$ of $f$.
A difference is that $c_t$ is no longer a constant and includes unknown parameters of $\bm{\lambda}$.
We consider the coefficients of these parameters.
Let $\bar{\nu}_2(c_t)$ be the minimum exponent of two of the coefficients.
For example, if $c_t = 2\lambda_1 + 4\lambda_2$, then $\bar{\nu}_2(c_t)= \min(\nu_2(2),\nu_2(4)) = 1$.
Hence, if $\bar{\nu}_2(c_t) \ge \nu_2(\lc(g))$, we scale $g$ and update $f$ by eliminating term $t$. 
The formal description of the normal form calculation and its soundness are deferred to Appendix~\ref{appendix:param_normal_form_calc}.
Below we apply the parametric normal form to check $\eta' - \mu\cdot \eta \in \myangle{F_\rho}$ with specific values of $\mu$ in the next proposition, whose proof is deferred to Appendix~\ref{appendix:proof_of_relation_of_param_nf_nf}.

\begin{proposition}
\label{prop:soundness_of_parametric_reduction}
For all $\mu\in \mathbb{Z}_{2^d}$ and concrete values $\overline{\bm{\lambda}}$ substituted for $\bm{\lambda}$, if 
 $\mathrm{PNF}(\eta' - \mu\cdot \eta\ |\ GB(F_\rho)) = 0$, then 
 $\mathrm{NF}(\eta' - \mu\cdot \eta\ |\ GB(F_\rho)) = 0$.
\end{proposition}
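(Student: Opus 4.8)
The plan is to compare the two reduction procedures $\mathrm{PNF}(\cdot\mid GB(F_\rho))$ and $\mathrm{NF}(\cdot\mid GB(F_\rho))$ step by step, viewing $\mathrm{PNF}$ as a ``lifted'' version of $\mathrm{NF}$ that tracks parametric coefficients symbolically. After substituting the concrete values $\overline{\bm{\lambda}}$ for $\bm{\lambda}$, I want to argue that every reduction step taken by $\mathrm{NF}$ on $(\eta'-\mu\cdot\eta)[\overline{\bm{\lambda}}/\bm{\lambda}]$ is ``covered'' by a corresponding step available to $\mathrm{PNF}$ on the symbolic polynomial $\eta'-\mu\cdot\eta$, so that if $\mathrm{PNF}$ reaches $0$ then $\mathrm{NF}$ must also be able to reach $0$.

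First I would set $f := \eta'-\mu\cdot\eta \in \mathcal{F} = \mathcal{L}[V]$ and recall the three defining properties of a parametric normal form from Definition~\ref{def:param_normal_form}. The hypothesis $\mathrm{PNF}(f\mid GB(F_\rho))=0$ combined with property (3) says that $f \in \myangle{GB(F_\rho)}_{\mathcal{L}[V]}$, i.e., there exist $h_g \in \mathcal{L}[V]$ with $f = \sum_{g\in GB(F_\rho)} h_g\cdot g$. Next I would specialize: substituting $\overline{\bm{\lambda}}$ for $\bm{\lambda}$ is a ring homomorphism $\mathcal{L}[V]\to\mathbb{Z}_{2^d}[V]$ that fixes $V$ and fixes each $g\in GB(F_\rho)\subseteq\mathbb{Z}_{2^d}[V]$, so applying it to the identity above gives $f[\overline{\bm{\lambda}}/\bm{\lambda}] = \sum_{g} h_g[\overline{\bm{\lambda}}/\bm{\lambda}]\cdot g \in \myangle{GB(F_\rho)}_{\mathbb{Z}_{2^d}[V]} = \myangle{F_\rho}$. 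Then, since $GB(F_\rho)$ is a strong Gr\"obner basis of $\myangle{F_\rho}$ by Theorem~\ref{prop:groeb_cond_over_gr}, the remark just before Algorithm~\ref{alg:compute_groebner_bases} (that $p\in\myangle{F_\rho}$ iff $\mathrm{NF}(p\mid GB(F_\rho))=0$) yields $\mathrm{NF}(f[\overline{\bm{\lambda}}/\bm{\lambda}]\mid GB(F_\rho))=0$, which is exactly the claim once one notes $f[\overline{\bm{\lambda}}/\bm{\lambda}]=(\eta'-\mu\cdot\eta)$ with the concrete coefficients in place and that $\mathrm{NF}$ as defined operates on $\mathbb{Z}_{2^d}[V]$.

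The main obstacle is the subtle mismatch between ``$\mathrm{PNF}$ reduces the \emph{symbolic} polynomial to $0$'' and ``$\mathrm{NF}$ reduces the \emph{specialized} polynomial to $0$'': a naive step-by-step simulation does not work because a term $t=c_t\cdot m_t$ with $\bar\nu_2(c_t)\ge\nu_2(\lc(g))$ symbolically may specialize to $c_t[\overline{\bm{\lambda}}/\bm{\lambda}]$ with a \emph{strictly larger} $\nu_2$ (or even $0$), so the divisions taken by $\mathrm{PNF}$ need not be literally replayable by $\mathrm{NF}$. The cleanest way around this — and the route I would take — is precisely to avoid simulating steps at all, and instead go through ideal membership as above: property (3) of $\mathrm{PNF}$ gives membership in the ideal over $\mathcal{L}[V]$, specialization is a homomorphism preserving that membership, and the strong Gr\"obner basis property of $GB(F_\rho)$ then forces $\mathrm{NF}$ to reach $0$. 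The one point that needs a little care is checking that property (3) genuinely gives an expression $f = \sum_g h_g\cdot g$ with \emph{arbitrary} $\mathcal{L}[V]$-coefficients $h_g$ (i.e. that $\myangle{GB(F_\rho)}_{\mathcal{L}[V]}$ is the full $\mathcal{L}[V]$-ideal), which is immediate from the definition of $\myangle{\cdot}_{\mathcal{L}[V]}$; once that is in hand, the homomorphism argument closes the proof with no modular-arithmetic computation at all.
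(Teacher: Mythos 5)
Your proposal is correct and follows essentially the same route as the paper's proof: both use property (3) of Definition~\ref{def:param_normal_form} to write $\eta'-\mu\cdot\eta$ as an $\mathcal{L}[V]$-combination of elements of $GB(F_\rho)$, then specialize $\bm{\lambda}$ to $\overline{\bm{\lambda}}$ to obtain membership in $\myangle{F_\rho}$ over $\mathbb{Z}_{2^d}[V]$, and finally invoke the strong Gr\"obner basis property to conclude $\mathrm{NF}(\eta'-\mu\cdot\eta\mid GB(F_\rho))=0$. Your added remark about why a step-by-step simulation of the reductions would fail (the $\nu_2$ of a specialized coefficient can jump) is a helpful observation but not part of the paper's argument.
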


\begin{example}
Let $V = \{x,y\}$ and $\rho(V, V') = (x' = x+1) \wedge (y' = y+x)$ and the degree $k=2$.
Hence, the strong Gr\"obner basis of $F_\rho$ is $G= \{x'-x-1, y'-y-x\}$.
When $\mu$ is set as $1$, the parametric normal form of $\eta' - \mu\cdot \eta$ is shown in the left part below.
According to Proposition~\ref{prop:soundness_of_parametric_reduction}, to make $\eta' - \mu\cdot \eta \in \myangle{F_\rho}$, it suffices to solve the system of linear congruences shown in the right part. \hfill $\square$

\begin{center}
\begin{minipage}{0.55\textwidth} 
\begin{align*} 
&\eta\ =  \lambda_1 x^2 + \lambda_2 xy + \lambda_3 y^2 + \lambda_4 x + \lambda_5 y + \xi,\\
&\mathrm{PNF}(\eta' - \mu\cdot \eta\ |\ G) \\
&\quad =(\lambda_2 + \lambda_3)\cdot x^2 + 2\lambda_3\cdot xy + \\
&\quad \ \ \ (2\lambda_1 + \lambda_2 +\lambda_5)\cdot x + \lambda_2\cdot y  + (\lambda_1 +\lambda_4) 
\end{align*} 
\end{minipage}
\begin{minipage}{0.4\textwidth} 
$$
\Rightarrow\quad  \left\{
\begin{aligned}
&\lambda_2 + \lambda_3 = 0 \\
& 2\lambda_3 = 0\\
&2\lambda_1+ \lambda_2 + \lambda_5= 0 \\
& \lambda_2 = 0 \\
& \lambda_1 + \lambda_4 = 0
\end{aligned}
\right.
$$
\end{minipage}
\end{center}
\end{example}

\smallskip
\noindent{\em $\blacktriangleright$ Step B3. Finding values of parameters.} 
It can be verified that each coefficient of the returned value of Algorithm~\ref{algo:param_normal_form_calc} is a linear polynomial of variables of $\{\lambda_q\}_{q\in M^{(k)}[V]}$.
Thus, the normal form computation results in a system of linear congruences modulo $2^d$ over $\{\lambda_q\}_{q\in M^{(k)}[V]}$, for which a succinct representation of the set of solutions can be computed efficiently~\cite{DBLP:journals/toplas/Muller-OlmS07}. 
Each solution corresponds to a synthesized polynomial invariant.  
However, to verify or refute the postcondition, one may need to enumerate the solutions that can be exponentially many. 
To avoid the explicit enumeration, we take the strategy of first finding a specific solution $\bm{\lambda_0}$ to the system of congruences and then solving the congruences over the real number space $\mathbb{R}$.
Let $\mathrm{Null}(\mathbf{A})$ be the nullspace of the coefficient matrix $\mathbf{A}$ over $\mathbb{R}$.
Then, we shift the nullspace through $\mathrm{Null}(\mathbf{A}) + \bm{\lambda_0}$ and use them to synthesize invariants.
A detailed description can be found in Appendix~\ref{appendix:description_of_methods_for_find_sol}.

In this work, we allow the initial condition to have inequalities.  
Inequalities cause the problem that the value of the unknown coefficient $\xi$ may not be determined when $\mu=1$.
This is because $\xi$ is eliminated in $\eta' - \mu\cdot \eta$, and we may not be able to determine the value of $\xi$ using the initial condition.
To tackle the inequalities, we add a fresh variable $x_0$ for each program variable $x\in V$.
Meanwhile, we replace every variable $x$ by $x_0$ in the $\theta(V)$.
Then $\xi$ can be expressed by a polynomial expression in the variables of $\{x_0: x\in V\}$.
For example, consider Example~\ref{example:smt_solving_for_inv} but we relax the initial condition to $\theta(V) = (0<x<10) \wedge (0 <y < 10)$.
We observe that it is impossible to give a concrete value of $\xi$ such that  $x- y + \xi = 0$ is always guaranteed during the loop since we do not know their initial values.
We introduce two fresh variables $x_0, y_0$, for the initial values of $x$ and $y$, respectively.
Then we express $\xi$ by $x_0 - y_0$ and add constraints on $x_0$ and $y_0$ based on the initial condition, i.e., $0 < x_0 , y_0 < 10$.

\smallskip
\noindent{\em $\blacktriangleright$ Step B4. Verifying pre- and postconditions.} 
Finally, with the solved invariant from the previous step, we translate the original CHC constraints into simpler forms that can be efficiently verified by existing SMT solvers.
We utilize SMT solvers to verify that the initialization constraint and the verification (or refutation) constraint hold.
We provide a formal description of our transition procedure in Appendix~\ref{appendix:formal_desc_of_step_b4}. The overall soundness is stated in the following theorem. 
\begin{theorem}[Soundness]
\label{thm:soundness_of_loop_inv_trans}
If our algorithm outputs satisfiable (or resp. unsatisfiable) results, then the postcondition in the input is correct (or resp. refuted). 
\end{theorem}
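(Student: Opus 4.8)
The plan is to show that whenever our algorithm returns ``satisfiable'' (resp.\ ``unsatisfiable''), the assertion $\phi := (\eta = 0)$ obtained by instantiating the template with the concrete coefficients $\overline{\bm{\lambda}}$ found in Step~B3 is a genuine inductive polynomial invariant in the sense of~(\ref{eq:chc_constraint_template}) that additionally satisfies the verification (resp.\ refutation) constraint of~(\ref{eq:chc_constraint}). Soundness of the reported answer then follows from the observation already recorded after the definition of inductive invariants --- every inductive polynomial invariant over-approximates the reachable program states --- together with the assumed soundness of the back-end SMT solver called in Step~B4.

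First I would verify the \textbf{Consecution} condition for $\phi$. The algorithm commits to an answer only after fixing a value of $\mu$ and concrete coefficients $\overline{\bm{\lambda}}$ solving the linear congruence system produced in Steps~B2--B3. Because that system is, by construction, exactly the family of equations stating that every coefficient of $\mathrm{PNF}(\eta' - \mu\cdot\eta \mid GB(F_\rho))$ vanishes, solving it is equivalent to $\mathrm{PNF}(\eta' - \mu\cdot\eta \mid GB(F_\rho)) = 0$ under the substitution $\bm{\lambda} \mapsto \overline{\bm{\lambda}}$. Proposition~\ref{prop:soundness_of_parametric_reduction} then gives $\mathrm{NF}(\eta' - \mu\cdot\eta \mid GB(F_\rho)) = 0$, hence $\eta' - \mu\cdot\eta \in \myangle{F_\rho}$ since $GB(F_\rho)$ is a strong Gr\"obner basis of $\myangle{F_\rho}$, and Proposition~\ref{prop:sufficient_cond_for_inv} delivers the consecution condition $[c(V)\wedge(\eta=0)\wedge\rho(V,V')]\Rightarrow(\eta'=0)$. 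The one place needing care is the constant coefficient $\xi$ in the case $\mu = 1$: since $\xi$ cancels in $\eta' - \mu\cdot\eta$, Step~B3 reintroduces it via the fresh variables $\{x_0 : x\in V\}$, so I would check that replacing $\xi$ by the corresponding polynomial in the $x_0$'s (and conjoining the $\theta$-induced constraints on the $x_0$'s) yields an equivalent reformulation of~(\ref{eq:chc_constraint_template}); this is routine bookkeeping rather than a genuine difficulty.

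Next, the \textbf{Initiation} condition is handed verbatim to the SMT solver in Step~B4 as the query $\theta(V)\Rightarrow(\eta = 0)$ (in the form amended by the fresh-variable encoding), and the algorithm returns an answer only if the solver confirms it; soundness of the solver thus gives initiation. Combining the two steps, $\phi$ meets both Initiation and Consecution, so the straightforward induction on execution length shows $\phi$ over-approximates all reachable states. In the ``satisfiable'' branch, Step~B4 additionally has the solver confirm the verification constraint $(\phi\wedge\neg c(V))\Rightarrow\kappa(V)$ of~(\ref{eq:chc_constraint}); since $\phi$ over-approximates the reachable states and this is precisely the statement that $\phi$ verifies $\kappa$, the postcondition holds at loop exit. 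The ``unsatisfiable'' branch is entirely symmetric, with $\kappa$ replaced by $\neg\kappa$ and ``verifies'' by ``refutes''. The main obstacle is not any single deep step but the careful matching of the algorithm's internal artifacts --- the congruence system emitted by Algorithm~\ref{algo:param_normal_form_calc}, the substitution of $\xi$ by a polynomial in the $x_0$'s, and the SMT queries of Step~B4 --- against the CHC constraints~(\ref{eq:chc_constraint}); once that correspondence is pinned down, the chain Proposition~\ref{prop:soundness_of_parametric_reduction} $\to$ strong Gr\"obner membership $\to$ Proposition~\ref{prop:sufficient_cond_for_inv} closes the argument.
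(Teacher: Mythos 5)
Your argument is correct and follows essentially the route the paper intends: Theorem~\ref{thm:soundness_of_loop_inv_trans} is stated without a separate proof, resting on exactly the chain you assemble (the congruence system forces $\mathrm{PNF}(\eta'-\mu\cdot\eta\ |\ GB(F_\rho))=0$ under the concrete substitution, Proposition~\ref{prop:soundness_of_parametric_reduction} then gives $\mathrm{NF}=0$ and hence ideal membership, Proposition~\ref{prop:sufficient_cond_for_inv} yields consecution, and the Step~B4 SMT queries discharge initiation and the verification/refutation constraints, after which induction on execution length concludes). The one delicate point, the treatment of $\xi$ via the fresh initial-value variables and the constant $c_s$ when $\mu=1$, you correctly flag and it matches the construction in Appendix~\ref{appendix:formal_desc_of_step_b4}.
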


\section{Implementation and Experimental Results}

In this section, we present the experimental evaluation of our SMT-solving approach. 
We first describe the detailed implementation of our approach, then the evaluation of our approach in quantifier-free bit-vector solving, and finally, polynomial invariant generation. 
All the experiments are conducted on a Linux machine with a 16-core Intel 6226R @2.90GHz CPU.

\subsection{Implementation}

\smallskip
\noindent{\em Quantifier-free equational bit-vector theory.} We implement the computation of strong Gr\"{o}bner bases with our improvement for calculating multiplicative inverse in the SMT solver \texttt{cvc5}~\cite{barbosa2022cvc5}. 
We use two computer algebra systems, namely \textsf{CoCoA-5}~\cite{CocoaSystem} and \textsf{Maple}~\cite{maple}.
In detail, we first parse the input formula in \textsf{cvc5} format to a polynomial in \textsf{CoCoA-5}, then implement the computation for strong Gr\"obner bases (Algorithm~\ref{alg:compute_groebner_bases}) in \textsf{CoCoA-5} to check the emptiness (Step A2), and finally resort to our complete decision procedure (Step A3) to find a solution where we use  \textsf{Maple} to find the roots of a univariate polynomial in $\mathbb{Z}_{2^n}$.

\smallskip
\noindent{\em Polynomial invariant generation.}
We implement our approach in Python 3.8.10.
Our implementation accepts inputs in the SMT-lib CHC format (\ref{eq:chc_constraint}) that include the initial, consecution, and postconditions, generates polynomial invariants by the algorithm given in Section~\ref{sec:poly_loop_inv} with $\mu$ set as $-1, 0, 1$, and uses \textsf{z3} to check whether the postcondition is verified or refuted and whether the initial condition implies the loop invariant. 
In the implementation, we invoke the Python package \textsf{Diophantine} for finding solutions of systems of linear congruences and \textsf{Sympy} for finding the nullspace of a matrix. 

\subsection{Quantifier-Free Equational Bit-Vector Theory}

\begin{table}[t]
\centering
\caption{Number of solved and unknown instances of all approaches.}
\begin{tabular}{|l|ccccc|c|}
\hline
\label{tab:num_solved}
\textbf{Solvers} & sat & unsat & unknown &timeout (10 sec) & memout (1 gb) & Solved \\ \hline
Bitwuzla & 674 & 354   & 281 & 4     & 1153  &  1028 \\
z3  & 499    &   436  & 281 &   1117  &   133 &  935  \\
cvc5 & 418 &  223    & 281 &  101  &  1443   & 641  \\
MathSAT  & 583  & 376   &  0 &  143  &  1364  &  959  \\
{\bf cvc5-sgb}  & \textbf{631} & \textbf{620}  & \textbf{0} & \textbf{1112}         & \textbf{103} &       \textbf{1252}          \\
\hline
\textbf{In total\quad }                                      &   &  & & &  &  2466        \\
\hline
\end{tabular}		
\end{table}

\begin{figure}[t]
    \centering
   \begin{subfigure}[b]{0.45\textwidth}
     \centering
    \includegraphics[width=1.1\textwidth]{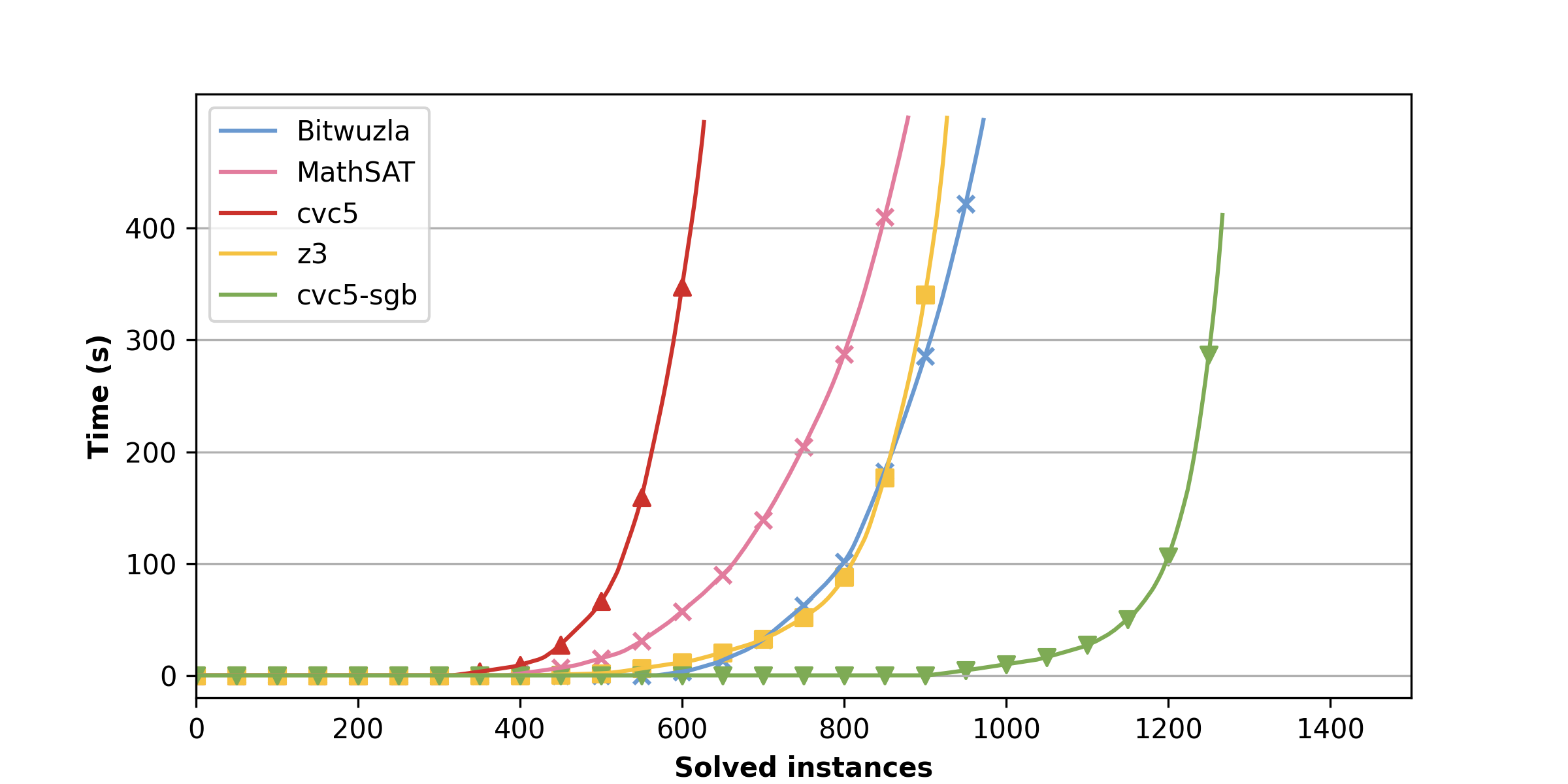}
    \caption{Comparison of Time Efficiency}
    \label{fig:performance_of_all_approaches}
  \end{subfigure}
  \begin{subfigure}[b]{0.45\textwidth}
     \centering
    \includegraphics[width=1.1\textwidth]{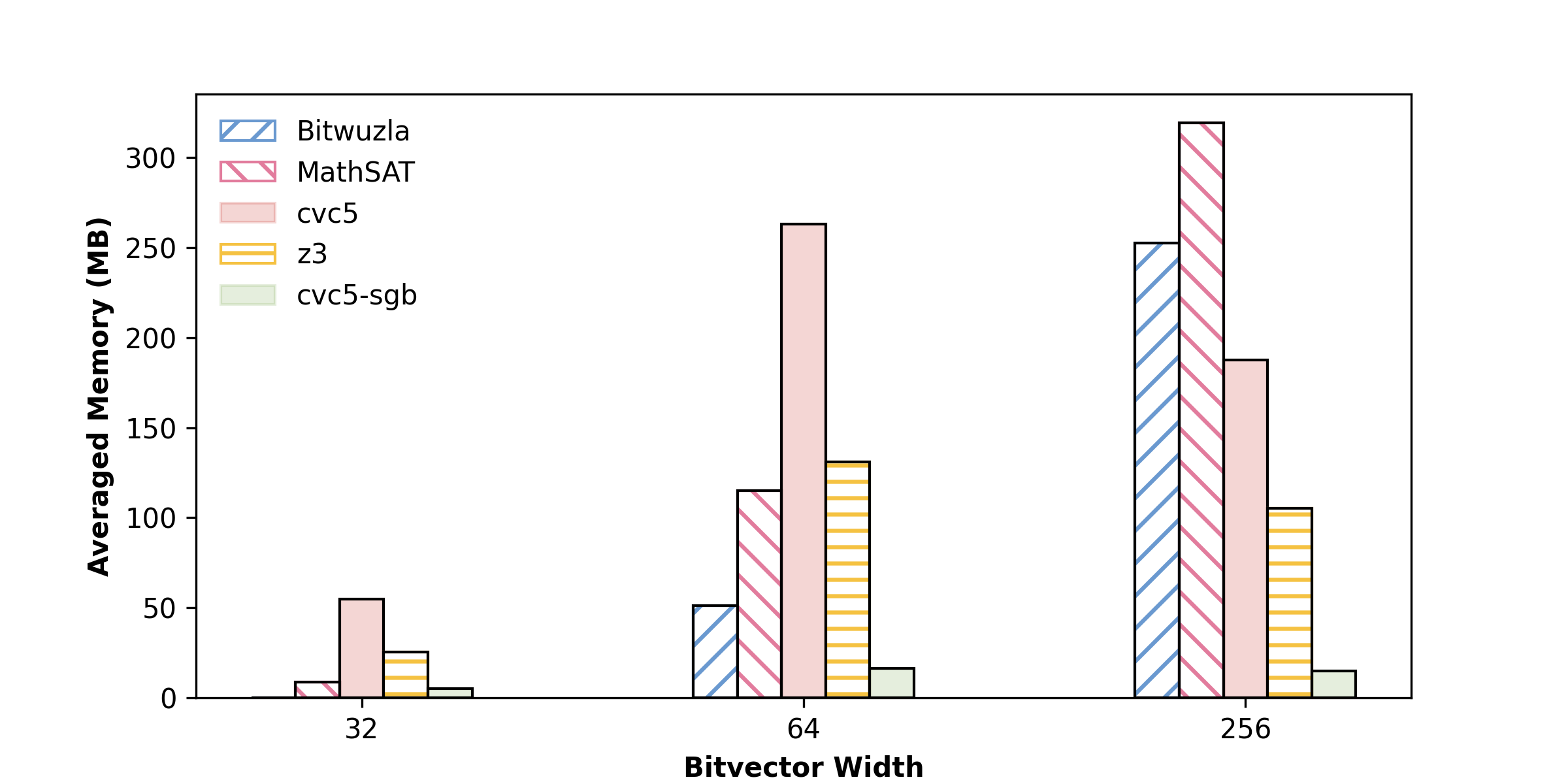}
    \caption{Comparison of Memory Usage}
    \label{fig:used_mem_of_all_approaches}
  \end{subfigure}
\end{figure}

\smallskip
\noindent{\em Benchmark setting.} We consider the  extensive benchmark set in~\cite{DBLP:conf/cav/OzdemirKTB23} as the baseline. The benchmarks in ~\cite{DBLP:conf/cav/OzdemirKTB23} are randomly generated quantifier-free formulas in a finite field $\mathbb{F}_p$ modulo a prime number $p$ . We adapt these benchmarks to \texttt{QF\_BV} (quantified-free, bit-vector) formulas as follows. 
First, if a variable (or constant) belongs to a finite field $\mathbb{F}_p$ satisfying $2^{d-1} < p < 2^d$, we then adapt it to a bit-vector variable with size $d$.
Second, the arithmetic operations over $\mathbb{F}_p$ are directly adapted to modular arithmetics over bit-vectors. 
The adapted benchmark set includes $2466$ benchmarks in SMT-LIB format and involves polynomial (in)equations of bit-vectors.

\smallskip
\noindent{\em Performance analysis.} 
We compare our methods with four state-of-art bit-vector solvers (\textsf{Bitwuzla}~\cite{DBLP:conf/cav/NiemetzP23}, \textsf{z3}~\cite{de2008z3}, \textsf{cvc5}~\cite{barbosa2022cvc5} and \textsf{MathSAT}~\cite{mathsat5}). 
\textsf{Bitwuzla} uses bit-blasting, \textsf{MathSAT} uses integer solving, and \textsf{z3}, \textsf{cvc5} are prominent comprehensive SMT solvers.
We set a time-out limit of $10$ seconds and a memory-out limit of $1$GB physical memory. 
Table~\ref{tab:num_solved} lists the number of solved instances of different solvers, where \textsf{cvc5-sgb} is our approach ($\S$\ref{sec:smt_acceleration}).
Our approach outperforms others in the number of solved instances (approx. $20\%$ more), especially in verifying unsatisfiable instances (approx. $40\%$ more) due to the use of strong Gr\"{o}bner bases.
It is worth noting that all the unsatisfiable instances are found by strong Gr\"{o}bner bases without invoking \textsf{Maple}.
Fig.~\ref{fig:performance_of_all_approaches} further shows the total time consumption on the solved instances of all approaches.
We observe that our approach has the lowest time consumption. 
Moreover, Fig.~\ref{fig:used_mem_of_all_approaches} depicts the average memory usage of the solved instances with different bit-vector sizes ($32, 64, 256$) by each approach. 
Our approach consumes the lowest amount of memory, while the memory consumption of others increases significantly as the size grows.

\subsection{Polynomial Invariant Generation}

\smallskip
\noindent{\em Benchmark setting.} We collect invariant generation tasks from three benchmark sets: 2016.Sygus-Comp~\cite{alur2016sygus}, 2018.SV-Comp~\cite{beyer2017software} and 2018.CHI-InvGame~\cite{10.1145/3173574.3173805}.
We classify the benchmarks into \emph{lin} ones that require only linear invariants and \emph{poly} ones that require polynomial invariants with degree more than one. There are in total $39$ linear benchmarks and $19$ poly benchmarks.

\begin{table}[t]
\centering
\caption{Performance of \textsf{Eldarica} and our approach on three datasets, where ``$\#$" is the number of benchmarks and the last line shows the averaged performance. 
} 
\begin{tabular}{|l|ccccc|crcrc|}
\hline
\multirow{3}{*}{\textbf{Dataset}} & \multicolumn{5}{c|}{\bf Linear}                                                                                                     & \multicolumn{5}{c|}{\bf Polynomial}                                                                                                        \\ \cline{2-11} 
& \multicolumn{1}{r|}{\multirow{2}{*}{$\#$}} & \multicolumn{2}{c|}{\bf Time (s)}                           & \multicolumn{2}{c|}{\bf Mem (mb)}    & \multicolumn{1}{r|}{\multirow{2}{*}{\#}} & \multicolumn{2}{c|}{\bf Time (s)}                           & \multicolumn{2}{c|}{\bf Mem (mb)}     \\ \cline{3-6} \cline{8-11} 
& \multicolumn{1}{r|}{}                    & \multicolumn{1}{r}{Eldarica} & \multicolumn{1}{r|}{Our} & \multicolumn{1}{r}{Eldarica} & Our & \multicolumn{1}{r|}{}                    & \multicolumn{1}{r}{Eldarica} & \multicolumn{1}{r|}{Our} & \multicolumn{1}{r}{Eldarica} & Our \\ \hline
2016.Sygus-Comp~\cite{alur2016sygus}  &   23 & 15.0 & \textbf{1.3} & 234.1 & \textbf{71.6} &   6  &  $>70.5$ & \textbf{5.5} & $>460.1$ &  \textbf{71.0}    \\
2018.SV-Comp~\cite{beyer2017software}  &   16 & 15.0 & \textbf{0.9} & 248.0 & \textbf{67.6} &   1  & $5.1$ & \textbf{0.5}  & $179.7$ & \textbf{70.6}     \\
2018.CHI-InvGame~\cite{10.1145/3173574.3173805}  &   0 & - & - & - & - &  12  & $>173.5$ & \textbf{7.4} & $>490.3$ & \textbf{72.5}     \\
\hline
\textbf{Average}   &   39 &  15.0 &  \textbf{1.1} &  239.8 & \textbf{70.0}  &   19  & $>132.1$  &  \textbf{6.4} & $>464.4$ & \textbf{71.9}   \\
\hline
\end{tabular}
\label{tab:performance_of_inv}
\end{table}

\smallskip
\noindent{\em Performance Analysis.} We compare our approach with \textsf{Eldarica}~\cite{8603013}, a state-of-art Horn clause solver that has the best performance in the experimental evaluation in \cite{DBLP:conf/kbse/YaoKSFWR23}. 
We set the time limit to $200$ seconds and the memory limit to $1$GB. Our approach solves all the benchmarks, while \textsf{Eldarica} times out over $11$ poly benchmarks. 
Table~\ref{tab:performance_of_inv} shows the average time and memory consumption over the benchmarks.
From the table, we can observe our approach has substantially better time efficiency and memory consumption, especially over poly benchmarks. 
For the polynomial benchmark, our approach has achieved an average speedup of more than $20$X in terms of time and an average reduction of more than $7$X in terms of memory.
Due to the space constraints, the detailed performance between \textsf{Eldarica} and our approach over these benchmarks is relegated to Table~\ref{tab:inv_on_2016_sygus_comp} and Table~\ref{tab:inv_on_2018_SV_comp} in Appendix~\ref{appendix:exper_results_of_loop_inv}, where ``$>200$" indicates that \textsf{Eldarica} times out on the benchmark. 
It is also worth noting that for most of these benchmarks, \textsf{z3} times out, and \textsf{cvc5} quickly outputs ``unknown''. 

\section{Related Works}
Bit-blasting approaches (\cite[Chapter 6]{DBLP:series/txtcs/KroeningS16}, \cite{DBLP:conf/issta/JiaH00MZ23}) decompose a bit-vector into the bits constituting it and solve the SMT problem by boolean satisfiability over these bits. Bit-blasting ignores the algebraic structure behind modular addition and multiplication
and hence cannot utilize them to speed up SMT solving. Compared with bit blasting, our approach leverages strong Gr\"{o}bner base to speed up the SMT solving of equational bit-vector theory.  

Integer-solving approaches~\cite{DBLP:conf/fmcad/Griggio11,DBLP:conf/vmcai/Jovanovic17,DBLP:conf/smt/Graham-Lengrand17} reduce bit-vector problems to the SMT solving of integer properties. 
Although bit-vectors can be viewed as a special case of integers, nonlinear integer theory is notoriously difficult to solve. 
Compared with these approaches, our approach solves the polynomial theory of bit-vectors via strong Gr\"{o}bner bases, which are more suitable to characterize algebraic properties of bit-vectors than general nonlinear integer theory. 

The application of Gr\"{o}bner bases to the equational theory of bit-vectors has been considered in previous works such as~\cite{DBLP:conf/fmcad/KaufmannBK19,JPAAgroebnerbasis,Pavlenko2011STABLEAN}. 
The Gr\"{o}bner bases used in these approaches are either restrictive (such as \cite{DBLP:conf/fmcad/KaufmannBK19} that uses Gr\"{o}bner bases over principal ideal domains) or too general (such as~\cite{JPAAgroebnerbasis,Pavlenko2011STABLEAN} that uses Gr\"{o}bner bases over general commutative Noetherian rings). 
This results in excessive computation to derive a Gr\"{o}bner basis. 
Compared with these results, our approach uses the strong Gr\"{o}bner bases that lie between principal ideal domains and commutative Noetherian rings and avoids the excessive computation by the succinct strong reduction in computing a strong Gr\"{o}bner basis.
Some recent work~\cite{DBLP:conf/cav/WienandWSKG08,DBLP:conf/sat/SeedKE20} adopts the methods of~\cite{JPAAgroebnerbasis} to compute a Gr\"obner basis without the need for solving an integer programming problem each time.
However, we have observed the Gr\"obner bases used in~\cite{DBLP:conf/cav/WienandWSKG08} are equivalent to strong Gr\"obner bases presented in~\cite{norton2001strong} while that in~\cite {DBLP:conf/sat/SeedKE20} is actually a slightly improved version of the strong Gr\"obner bases.

It is also worth noting that the work~\cite{DBLP:conf/fmcad/KaufmannBK19} applies Gr\"{o}bner bases to the special case of acyclic circuits, the work~\cite{JPAAgroebnerbasis} focuses on the special case of boolean fields, and the work~\cite{DBLP:conf/sat/SeedKE20} uses Gr\"obner bases for bit-sequence propagation, which solves the SMT problem by alternative methods.
In contrast, our approach completely uses strong Gr\"{o}bner bases to SMT solving and tackles invariant generation. 

Abstract interpretation~\cite{DBLP:conf/sas/SeedCKE23,DBLP:journals/toplas/Muller-OlmS07,DBLP:journals/toplas/ElderLSAR14,DBLP:conf/popl/CousotC77} has also been considered to generate polynomial equational invariants over bit-vectors. 
These approaches rely on well-established abstract domains and, hence, are orthogonal to our approach. 

Gr\"{o}bner bases have also been applied to finite fields~\cite{DBLP:conf/cav/OzdemirKTB23} and real numbers~\cite{DBLP:conf/popl/SankaranarayananSM04,cachera2014inference,rodriguez2004automatic}. 
The work~\cite{DBLP:conf/cav/OzdemirKTB23} utilizes Gr\"obner bases to determine the satisfiability of polynomial (in)equations where coefficients come from a finite field $\mathbb{F}_p$, while the work~\cite{rodriguez2004automatic,DBLP:conf/popl/SankaranarayananSM04,cachera2014inference} explores the generation of real polynomial loop invariants via Gr\"{o}bner bases.
Compared with these results, our approach is orthogonal as we consider polynomials over finite rings and use strong Gr\"{o}bner bases. 

\section{Conclusion and Future Work}
We have proposed a novel approach for SMT solving of equational bit-vector theory via strong Gr\"{o}bner bases, including the quantifier-free case and the quantified case of invariant generation. 
One future work is to optimize our approach using methods in e.g. \cite{DBLP:conf/sas/SeidlFP08}. 
Another direction is to extend our methods to more bit-vector arithmetics, such as bitwise operations and inequalities by, e.g., the Rabinowitsch trick~\cite[Chapter 4.2, Proposition 8]{cox2013ideals},\cite{DBLP:conf/lpar/HaderRK23}.
It is also interesting to incorporate our approach with program synthesis techniques such as~\cite{DBLP:journals/pacmpl/ParkDR23}.

\bibliographystyle{splncs04}
\bibliography{reference}

\newpage
\appendix
\section{Omitted Content of Section~\ref{sec:smt_acceleration}}

\subsection{Omitted Proof of Proposition~\ref{prop:odd_inverse}}
\label{appendix:omitted_proof_of_prop1}

\begin{proof}
First, let $\nu_2(a)= \alpha$.
Then $a$ can be written in the form of $a = 2^\alpha\cdot a'$, where $a'$ is an odd number.
Thus, according to Bézout’s lemma, there exist $s, t\in \mathbb{Z}_{2^d}$ such that $s\cdot a' + t\cdot 2^d = 1$. 
Thus, $a\cdot s = 2^\alpha \cdot (1- t\cdot 2^d) = 2^\alpha$.

Second, we show when $\alpha=0$, $b$ is unique.
Otherwise, if there are two distinct $b_1, b_2$ such that $ab_1 = ab_2 =1$, then $a\cdot (b_1 - b_2) = 0$. 
Since $a$ is odd and $b_1 \neq b_2$, it causes a contradiction.
\end{proof}

\subsection{Omitted Parts of Proposition~\ref{prop:equiv_cond_for_phi_sat}}
\label{appendix:omitted_proof_of_prop2}
\begin{proof}
First, we show that $f_i \neq g_i$ has a solution is equivalent to $e_i(f_i-g_i) = 2^{d-1}$ has a solution.
If $f_i\neq g_i$ has a solution, then there exists a non-zero constant number $c\in \mathbb{Z}_{2^d}$ such that $f_i- g_i = c$ has solutions.
Since $c\neq 0$, according to Proposition~\ref{prop:odd_inverse}, there exists an integer $b\in \mathbb{Z}_{2^d}$ such that $cb = 2^{d-1}$. 
Thus, $e_i(f_i - g_i) = 2^{d - 1}$ definitely has a solution.
On the other hand, it is not hard to verify each solution of $e_i(f_i - g_i) = 2^{d-1}$ is also a solution of $f_i \neq g_i$. 
Thus, the above statement implies $\Phi$ is satisfiable if and only if $\bigwedge_{i} (h_i = 0)$ is satisfiable, which is equivalent to the variety of $H$ is non-empty.
\end{proof}

\subsection{Omitted Proof of Theorem~\ref{prop:groeb_cond_over_gr}}
\label{appendix:omitted_proof_of_prop:groeb_cond_over_gr}
\begin{proof}
To begin with, we show that $\apoly{f}$ and $\spoly{f, g}$ satisfy the definitions of A-polynomial and S-polynomial, respectively.
First, for each $a\in \mathrm{Ann}(\lc(f))$, since $a \cdot \lc(f) = 0$, $\nu_2(a) \ge d - \nu_2(\lc(f))$.
On the other hand, it is not hard to verify each integer $b\in \myangle{2^{d-\nu_2(\lc(f))}}_{\mathbb{Z}_{2^n}}$ belongs to  $\mathrm{Ann}(\lc(f))$
Thus, $\myangle{2^{d-\nu_2(\lc(f))}}_R =  \mathrm{Ann}(\lc(f))$, which implies $\apoly{f} \in \mathrm{Apoly}(f)$.
Second, since $c_1\lc(f) = c_2\lc(g) = 2^v$ is definitely a least common multiple of $\lc(f)$ and $\lc(g)$, $\apoly{f,g}$ is a valid S-polynomial of $f$ and $g$.

Next, as ~\cite[Proposition 3.9, 6.2]{norton2001strong}, we know Algorithm~\ref{alg:compute_groebner_bases} always terminates and returns a strong Gröbner basis when the coefficient ring is $\mathbb{Z}_{2^n}$.
\end{proof}

\subsection{Omitted Proof of Theorem~\ref{thm:correctness_of_find_zeros}}
\label{appendix:omitted_proof_of_correctness_of_find_zeros}
\begin{proof}
First, we show the complete procedure always terminates by demonstrating the operations of Line~\ref{line:solve_univar},\ref{line:factorization}, and \ref{line:exhaust} can only be executed finite times.
Each time we execute Line~\ref{line:solve_univar}, the number of unassigned variables in $M$ strictly decreases.
Hence, it cannot be executed infinite times.
In addition, since each polynomial can only be factorized into finite polynomials over $\mathbb{Z}$ whose degrees are non-zero and the range of $i$ is limited, the operation in Line~\ref{line:factorization} will also be executed finite times.
Moreover, since the size of $\mathbb{Z}_{2^n}$ is finite, the last operation (in Line~\ref{line:exhaust}) will also be executed finite times.

Second, we show it will always return a valid solution if $\mathcal{V}(H)$ is non-empty. 
When $G$ contains a univariate polynomial $p\in \mathbb{Z}_{2^n}[x_i]$, for any solution $z\in \mathcal{V}(G)$, it also belongs to $Zeros(p)$.
Second, if $p$ can be decomposed to two polynomials $f, g$ over $\mathbb{Z}$,  for any solution $z\in \mathcal{V}(G)$, it would also be a solution to $(f\cdot 2^i = 0) \wedge (g\cdot 2^{d-i} = 0)$ for some $0\le i \le d-1$.
Conversely, for each $z$ of $\mathcal{V}(G\cup \{f\cdot 2^i, g\cdot 2^{d-i}\}\setminus \{p\})$, it also belongs to $\mathcal{V}(G)$.
Finally, the correctness of the exhausted search is obvious.
\end{proof}

\subsection{Omitted Description of Finding Multiplicative Inverse}
\label{appendix:proof_of_complexity_of_mul_inverse}

Before presenting the proof of Theorem~\ref{thm:complexity_of_mul_inverse}, we first present our approach to implementing the operation $f\leftarrow \frac{2^d-r}{a}$ (in Line~\ref{line:div_operation} of Algorithm~\ref{alg:find_inverse}) through Newton's method.
Then we will estimate the complexity of the number of arithmetic operations and binary operations. 

Since $1\le r \le a-1$, we have $\frac{2^d-r}{a} = \left\lfloor \frac{2^d}{a} \right\rfloor$.
To calculate it, we first give an estimation of $\frac1a$ through Newton's method and then multiply it by $2^d$ through one bitwise shift operation.
Next, consider the binary representation of $\frac{1}{a}$.
If it is finite, it can have a non-zero bit at most in the first $\lceil\log(a)\rceil$ bits.
Otherwise, since $a$ is odd, it could be represented as $0.\overline{(b_1\ldots b_\ell)}_2$, where $b_1\ldots b_\ell$ is the repetend with $\ell \le a-1$~\cite{dickson1920history}.
Hence, we first calculate the first $2a$ bits of $a$, then determine the length of repetend, and finally extend it to $d$ bits.
Then, through a shifting operation (i.e., $\frac1a \gg d$), we are able to get the value of $\frac{2^d-r}a$.

We adopt Newton's method to find the first $2a$ bits of $\frac{1}a$. 
Let $f(x) = \frac{1}x - a$.
In particular, let $x_n$ be the approximate result of the zero of $f(x)$ at the $n$-th round. 
Initially, we let $ x_0 = \frac{1}{2^{\lceil\log a \rceil}}$.
We calculate it by the following equation,
\begin{equation}\label{eqn:recur_eqn}
x_{n+1} = x_n - \frac{f(x_n)}{f'(x_n)} = x_n - \frac{\frac{1}{x_n} - a}{-\frac{1}{x_n^2}}= x_n(2-x_n\cdot a)\,.    
\end{equation}
Hence, 
$$
\abs{x_n - \frac1a} = a\abs{x_{n-1} - \frac1a}^2 = \cdots = a^{2^n-1} \abs{x_0 - \frac1a}^{2^n} < (ax_0 - 1)^{2^n}\,. 
$$
Since the first bit of the fractional part of $\frac{a}{2^{\lceil\log a \rceil}}$ should be one, $\abs{ax_0-1} \le \frac12$.
To make $\abs{x_n - \frac1a} < \frac{1}{2^{2a}}$, it suffices to let $n \ge \log (2a) =  \log a + 1$. 

Afterward, we are going to find the concrete value for $\ell$.
We start from $i = 1$ to determine whether $b_1\ldots b_i$ is a valid repetend by shifting $b_1\ldots b_i$ and determining whether the first $2a$ bits of $b_1\ldots b_ib_1\ldots b_i\ldots$ and the fractional part of $x_n$ are the same. 
If not, we increase $i$ by one and continue the above procedure.
Its correctness is guaranteed by the following lemma.
\begin{lemma}
The above procedure returns a valid and minimum repetend of $\frac{1}{a}$.
\end{lemma}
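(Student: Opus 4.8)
The statement to prove is that the repetend-finding procedure returns a \emph{valid} and \emph{minimum} repetend of $\frac{1}{a}$, where $a$ is odd and the procedure inspects the first $2a$ bits of the Newton approximation $x_n$ to $\frac{1}{a}$.

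\textbf{Proof plan.} The plan is to split the claim into two parts: (i) the returned string $b_1\ldots b_\ell$ is a valid repetend of $\frac{1}{a}$, and (ii) it is the \emph{minimum} such, i.e., $\ell$ equals the true period of the binary expansion of $\frac{1}{a}$. For the background facts I would first recall that since $a$ is odd, $\gcd(a,2)=1$, so $2$ is a unit in $\mathbb{Z}_a$ and the binary expansion of $\frac{1}{a}$ is purely periodic with period $\ell^\ast=\mathrm{ord}_a(2)$, the multiplicative order of $2$ modulo $a$; in particular $\ell^\ast \le a-1$, which is what makes the $2a$-bit window large enough to contain at least two full periods. The key quantitative input is the error bound already derived in the excerpt, $\lvert x_n - \tfrac1a\rvert < \tfrac{1}{2^{2a}}$, guaranteed once $n \ge \log a + 1$: this means the first $2a$ fractional bits of $x_n$ coincide \emph{exactly} with the first $2a$ fractional bits of $\frac1a$ (one must be slightly careful about the boundary case where $x_n$ could differ by a carry, but since $\frac1a$ has an infinite non-eventually-constant expansion — it is periodic and $\ne 0,1$ — it is not a dyadic rational, so no ambiguous $\ldots0111\ldots$ vs $\ldots1000\ldots$ situation arises, and the truncation to $2a$ bits is unambiguous).

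\textbf{Main steps.} First I would show the procedure terminates and never returns something too short: for each trial length $i$ from $1$ upward, the test compares the first $2a$ bits of the periodic string $\overline{(b_1\ldots b_i)}$ against the first $2a$ fractional bits of $x_n$ (equivalently, of $\frac1a$). Since $\frac1a = 0.\overline{(c_1\ldots c_{\ell^\ast})}_2$ with true period $\ell^\ast \le a-1$, when $i$ reaches $\ell^\ast$ the string $b_1\ldots b_{\ell^\ast}$ read off from the expansion equals $c_1\ldots c_{\ell^\ast}$, so $\overline{(b_1\ldots b_{\ell^\ast})}$ agrees with $\frac1a$ in \emph{all} bits, a fortiori in the first $2a$; hence the procedure stops at $i = \ell^\ast$ at the latest, so $\ell \le \ell^\ast$. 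Second, for minimality/validity I would argue $\ell \ge \ell^\ast$ and that $\ell$ is an actual period: if the test passes at length $i=\ell$, then the first $2a$ bits of $\frac1a$ are $(\ell)$-periodic. Because $2\ell \le 2(a-1) < 2a$, the window contains at least two full blocks of length $\ell$, so $\ell$-periodicity on the first $2a$ bits forces genuine $\ell$-periodicity of the whole expansion: writing $\frac1a$'s bit sequence as $s$, having $s_j = s_{j+\ell}$ for all $j \le 2a-\ell$ together with pure periodicity with period $\ell^\ast$ propagates to all $j$ — concretely, periodicity of a purely periodic sequence on a window of length $\ge \ell + \ell^\ast$ implies global periodicity (a standard Fine–Wilf-type argument, or directly: $2^\ell \equiv 2^{0}\cdot(\text{shift})$ translates to a congruence $2^{\ell} m \equiv m \pmod a$ for the relevant residue $m$, forcing $\ell^\ast \mid \ell$). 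Combined with $\ell \le \ell^\ast$ from the first part, we get $\ell = \ell^\ast$, so the returned block is exactly one full minimal period, hence valid and minimum.

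\textbf{Anticipated obstacle.} The delicate point — and the one I would spend the most care on — is the implication ``$\ell$-periodic on the first $2a$ bits $\Rightarrow$ globally $\ell$-periodic.'' This is precisely where the choice of window size $2a$ (rather than something smaller) is used, via $\ell^\ast \le a-1$ so that $2a \ge \ell + \ell^\ast + 1$ comfortably. I expect the cleanest route is to convert everything to modular arithmetic: the bit at position $j$ of $\frac1a$ is determined by $\lfloor 2^j/a\rfloor \bmod 2$, equivalently by the residue $2^j \bmod a$; an $\ell$-block repetition on a long enough prefix says $2^{j+\ell} \equiv 2^j \pmod a$ for enough consecutive $j$, and since $2$ is invertible mod $a$ this collapses to $2^\ell \equiv 1 \pmod a$, i.e. $\ell^\ast \mid \ell$. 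I would also double-check the edge case $\ell^\ast = a-1$ (maximal period, $a$ a prime with $2$ primitive) to confirm $2\ell^\ast = 2(a-1) < 2a$ still leaves room for the argument — it does. The remaining parts (termination, the carry/non-dyadic subtlety, reading the block off the expansion) are routine once the periodicity-propagation lemma is in place.
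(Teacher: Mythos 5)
Your proof is correct, but it takes a genuinely different route from the paper's. The paper argues purely by string combinatorics: assuming the procedure returns length $i$ while the true minimal repetend has length $\ell$ with $i<\ell<a$, it splits into the cases $i\mid\ell$ and $i\nmid\ell$ and derives a contradiction by index manipulation on the bits (in the second case essentially a Fine--Wilf-style descent inside the $2a$-bit window, showing a shorter period $\ell \bmod i$ would have been detected first). You instead work number-theoretically: you identify the minimal period with $\ell^\ast=\mathrm{ord}_a(2)$ (using that the expansion of $1/a$ is purely periodic for odd $a$), get $\ell\le\ell^\ast$ because the procedure must succeed by trial $\ell^\ast$, and then convert $\ell$-periodicity of the first $2a$ bits into the congruence $2^\ell\equiv 1\pmod a$ via the tail/residue correspondence, so $\ell^\ast\mid\ell$ and hence $\ell=\ell^\ast$. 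Your version buys a sharper statement (the returned length is exactly $\mathrm{ord}_a(2)$) and avoids the paper's case split and its somewhat terse index bookkeeping; the paper's version is more elementary, using only the bound $\ell\le a-1$ on the period and no modular-order facts. One caveat: your dismissal of the carry issue between $x_n$ and $1/a$ via non-dyadicity alone is not quite sufficient --- an approximation within $2^{-2a}$ of a non-dyadic number can still differ in the first $2a$ bits if $1/a$ lies very close to a multiple of $2^{-2a}$; one needs the slightly stronger error bound $|x_n-1/a|<1/(a\,2^{2a})$ (a constant number of extra Newton steps), or the observation that the iterates approach $1/a$ from below while $1/a$ exceeds its $2a$-bit truncation by at least $1/(a\,2^{2a})$. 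The paper glosses over this point entirely, so it does not affect the comparison of the two arguments.
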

\begin{proof}
Suppose the length of repetend returned by the above procedure is $i$. 
For the sake of contradiction, we assume the length of minimum repetend is $\ell$ satisfying $i < \ell < a$.
If $i$ divides $\ell$, that means the valid repetend $b_1\ldots b_\ell$ can also be written as the form of $b_1\ldots b_ib_1\ldots b_i\ldots$, which causes a contradiction.
Otherwise, since the first $2a$ bits of $b_1\ldots b_ib_1\ldots b_i\ldots$ and the fractional part of $x_n$ are the same, we have 
$$b_j = b_{\ell+j} = b_{(\ell + j) \ \mathrm{mod}\ i}, \text{for any } 1\le j \le i\,.$$
Since $\ell \mod i \neq 0$ is smaller than $i$, the above equation implies $\ell \mod i$ should be first returned by our procedure instead of $i$.
This causes a contradiction.
Hence, $b_1\ldots b_i$ is a valid repetend. 
Additionally, it is not hard to verify that it is also the minimum.
\end{proof}

Next, we show the proof of Theorem~\ref{thm:complexity_of_mul_inverse}, which estimates the number of arithmetic operations and binary operations.
\begin{proof}
In Line~\ref{line:2_d_mod_a}, we invoke the modular exponentiation algorithm.
In each round, we do one multiplication ($r\leftarrow r^2$) and one modulo operation ($r \leftarrow r \mod a$). 
There are at most $\lceil \log d \rceil$ rounds.
Hence, the number of arithmetic operations of Line~\ref{line:2_d_mod_a} is at most $2\lceil \log d \rceil$.

In Line~\ref{line:div_operation}, as described, we first adopt Newton's method to compute the first $4a$ bits of $\frac{1}a$. 
It runs no more than $\log a+1$ rounds, and in each round, there are two multiplication and one subtraction.
One multiplication costs no more than $\mathcal{O}(2\cdot (2a)^2)$ and a subtraction costs $\mathcal{O}(2\cdot 2a)$ binary operations.
Moreover, the latter procedure for finding $\ell$ costs no arithmetic operations but $\mathcal{O}(2a\cdot a)$ binary operations.
After that, we shift $x_n$ multiply and find its first $d$ bits, which costs $\mathcal{O}(d)$ binary operations.

In Line~\ref{line:extended_euclidean}, we invoke extended Euclidean algorithm to find $k_1, k_2$ such that $k_1\cdot r+ k_2\cdot a = -1$, where $r < a$.
It runs no more than $\mathcal{O}(\log a)$ rounds.
In each round, there are five arithmetic operations (including one division, two multiplications, and two substractions). 
Thus, this step costs $\mathcal{O}(5\cdot \log a)$ arithmetic operations.
Since each operand is no more than $a$, it costs $\mathcal{O}(\log a\cdot (6\log^2 a + 4\log a))$ binary operations. 

The multiplication in Line~\ref{line:return_final_res} is the most expensive operation.
It costs one multiplication and one subtraction, and $\mathcal{O}(2d^2 + 3d)$ binary operations.

By summing them up, we could get the number of arithmetic operations by
$$
\mathcal{O}\left(2\log d + 3(\log a+1) + 5\log a  + 3 \right) = \mathcal{O}(2\log d + 8\log a),
$$
and the number of binary operations by 
\begin{gather*}
\mathcal{O}(\log d\cdot ((\log a)^2 + \log a) + (\log a+1)\cdot (8 a^2 + 4 a) + 2 a^2 + d \\
 + \log a\cdot (6\log^2 a + 4\log a)  + 2d^2 + 3d) \\
= \mathcal{O}(2d^2 + 4d + a^2\cdot (8\log a  + 2) + 4a\cdot \log a),
\end{gather*}
which concludes Theorem~\ref{thm:complexity_of_mul_inverse}.
\end{proof}

\begin{remark}
When the first $2a$ bits of $\frac1a$ are precomputed, and the procedure of finding minimum repetend is optimized by the Knuth–Morris–Pratt (KMP) algorithm, the number of arithmetic operations can be optimized to 
$\mathcal{O}(2\log d + 5\log a +3)$.
Meanwhile, the number of binary operations can be optimized to 
\begin{gather*}
\mathcal{O}(\log d\cdot ((\log a)^2 + \log a) + a + d
+ \log a\cdot (6\log^2 a + 4\log a)  + 2d^2 + 3d) \\
= \mathcal{O}(2d^2 + 4d + a + 6\log^3 a)\,.
\end{gather*}
\end{remark}

\section{Omitted Proofs of Section~\ref{sec:loop_invariant}}

\subsection{Paramertic Normal Form Calculation}
\label{appendix:param_normal_form_calc}
\begin{algorithm}[H]
\caption{Paramertic Normal form of $f$ with respect to $G$.}
\label{algo:param_normal_form_calc}
\begin{algorithmic}[1]
\renewcommand{\algorithmicrequire}{\textbf{Parameters:}}	
\Require $f$ and $G$, where $f\in \mathcal{L}[V]$ and $\mathcal{L} = \mathbb{Z}_{2^n}[\bm{\lambda}]$ and $G\subseteq \mathbb{Z}_{2^n}[V]$
\Ensure  $\mathrm{PNF}_G(f)$ 
\Function{$\mathrm{PNF}_G(f)$}{}
\While{there exists $g\in G$ such that $\lm(g)\ |\ t$ and $\nu_2(\lc(g)) \le \bar{\nu}_2(c_t)$, where $t =c_tm_t$ is a non-zero term of $f$}\label{line:loop_cond}
\State $f \leftarrow f - \frac{c_t}{2^{\nu_2(\lc(g))}} \cdot \left(\frac{\lc(g)}{2^{\nu_2(\lc(g))}}\right)^{-1}\frac{m_t}{\lm(g)}g$\;
\EndWhile
\Return{$f$}\;
\EndFunction
\end{algorithmic}
\end{algorithm}

Given a polynomial $f$, suppose the returned value of Algorithm~\ref{algo:param_normal_form_calc} is $\mathrm{PNF}_G(f)$.
The correctness of Algorithm~\ref{algo:param_normal_form_calc} is shown as follows.
\begin{lemma}[Correctness of Algorithm~\ref{algo:param_normal_form_calc}]
Algorithm~\ref{algo:param_normal_form_calc} always terminates, and  $\mathrm{PNF}_G(f)$ is a valid parametric normal form.
\end{lemma}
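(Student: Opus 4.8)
The plan is to prove two things about Algorithm~\ref{algo:param_normal_form_calc}: (i) it terminates on every input, and (ii) the returned polynomial $\mathrm{PNF}_G(f)$ satisfies the three defining conditions of a parametric normal form in Definition~\ref{def:param_normal_form}. For termination, I would exhibit a well-founded measure that strictly decreases with each iteration of the \textbf{while} loop on Line~\ref{line:loop_cond}. The natural candidate is the finite multiset of monomials occurring in $f$ (with their multiplicities), ordered by the multiset extension of the well-ordered monomial ordering $\prec$. Each iteration selects a term $t = c_t m_t$ of $f$ together with $g\in G$ with $\lm(g)\mid m_t$ and $\nu_2(\lc(g))\le\bar\nu_2(c_t)$, and replaces $f$ by $f - \frac{c_t}{2^{\nu_2(\lc(g))}}\bigl(\frac{\lc(g)}{2^{\nu_2(\lc(g))}}\bigr)^{-1}\frac{m_t}{\lm(g)}\,g$. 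The key computational check is that the coefficient of $m_t$ in the subtracted multiple of $g$ equals $c_t$ exactly (using Proposition~\ref{prop:odd_inverse} to see the odd part of $\lc(g)$ is invertible mod $2^d$, so the scaling is well-defined), hence $m_t$ is eliminated from $f$; and every new monomial introduced by $\frac{m_t}{\lm(g)}\,g$ is $\prec m_t$ since $\lm(g)$ is the $\prec$-largest monomial of $g$. Therefore the multiset strictly decreases, and by well-foundedness the loop terminates.

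Next I would verify the three conditions. Condition~1, $\mathrm{PNF}(0\mid G)=0$, is immediate since the loop guard fails instantly when $f=0$. Condition~3, that $r := f - \mathrm{PNF}(f\mid G)$ lies in $\myangle{G}_{\mathcal{L}[V]}$, follows by an invariant argument: each iteration subtracts from $f$ an $\mathcal{L}[V]$-multiple of some $g\in G$ (the scalar $\frac{c_t}{2^{\nu_2(\lc(g))}}\bigl(\frac{\lc(g)}{2^{\nu_2(\lc(g))}}\bigr)^{-1}$ lies in $\mathcal{L}$ and $\frac{m_t}{\lm(g)}$ is a monomial), so the total change $r$ to $f$ across all iterations is a finite $\mathcal{L}[V]$-linear combination of elements of $G$, i.e. $r\in\myangle{G}_{\mathcal{L}[V]}$. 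Condition~2, that if $h:=\mathrm{PNF}(f\mid G)\ne 0$ then $\lt(h)\notin\myangle{\{\lt(g)\}}_{\mathcal{L}[V]}$ for every $g\in G$, follows from the negation of the loop guard on termination: for $h\ne 0$, write $\lt(h)=c\,m$; for any $g\in G$, the loop did not fire on the leading term, so either $\lm(g)\nmid m$ or $\nu_2(\lc(g)) > \bar\nu_2(c)$. In the first case $\lt(h)$ is clearly not in the $\mathcal{L}[V]$-ideal generated by $\lt(g)$ (the monomial $m$ is not a monomial multiple of $\lm(g)$, so it cannot appear in any $\mathcal{L}[V]$-multiple of $\lt(g)$). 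In the second case, every element of $\myangle{\{\lt(g)\}}_{\mathcal{L}[V]}$ that is supported on $m$ has coefficient a $2^{\nu_2(\lc(g))}$-multiple of the parametric coefficient, forcing $\bar\nu_2 \ge \nu_2(\lc(g))$ of that coefficient, contradicting $\bar\nu_2(c) < \nu_2(\lc(g))$ — here I must be careful with the semantics of $\bar\nu_2$ on parametric coefficients (the minimum $2$-adic valuation over the coefficients of the $\lambda$-polynomial) and argue that membership in the $\mathcal{L}[V]$-ideal generated by $\lt(g)$ on a fixed monomial forces divisibility of $\bar\nu_2$ by $\nu_2(\lc(g))$, coefficient-wise.

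The main obstacle I anticipate is Condition~2 in the parametric setting: unlike the classical case over a field or even over $\mathbb{Z}_{2^d}$ with constant coefficients, the "leading coefficient" here is a polynomial in $\bm\lambda$, and $\bar\nu_2$ is a crude lower bound on its $2$-adic content rather than an exact valuation. I need to make precise what $\myangle{\{\lt(g)\}}_{\mathcal{L}[V]}$ is — namely $\{\,\bm x^{\bm\beta}\cdot p(\bm\lambda)\cdot \lt(g) : \bm\beta\in\mathbb{N}^n,\ p\in\mathcal{L}\,\}$ summed — and show that if $\lt(h)$ were in this set, then comparing the monomial $m$ forces $\lm(g)\mid m$ and comparing the $\mathcal{L}$-coefficient of $m$ forces every coefficient (in $\mathbb{Z}_{2^d}$) of the $\lambda$-polynomial $c$ to be a multiple of $2^{\nu_2(\lc(g))}$, i.e. $\bar\nu_2(c)\ge\nu_2(\lc(g))$, contradicting the terminated loop guard. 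Once this ideal-membership-versus-valuation bookkeeping is spelled out, the rest is routine.
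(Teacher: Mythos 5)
Your proposal is correct and follows essentially the same route as the paper's proof: termination from well-foundedness of the monomial ordering, Condition~1 trivially, Condition~3 by noting each iteration subtracts an $\mathcal{L}[V]$-multiple of some $g\in G$, and Condition~2 by contradiction with the failed loop guard (divisibility of the monomial plus $\bar{\nu}_2$ of the parametric coefficient being at least $\nu_2(\lc(g))$). Your multiset-based termination measure and your explicit bookkeeping of $\bar{\nu}_2$ on $\lambda$-polynomial coefficients are in fact slightly more careful than the paper's terser versions of the same arguments, since the algorithm may reduce non-leading terms.
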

\begin{proof}
We first show it always terminates.
Otherwise, suppose the leading monomial of input $f$ is $\bm{x}^{\bm{\alpha}}$ and the largest leading monomial of $G$ is $\bm{x}^{\bm{\beta}}$.
If the reduction algorithm does not terminate, that means there exists an infinite number of monomials between $\bm{x}^{\bm{\alpha}}$ and $\bm{x}^{\bm{\beta}}$, which is impossible since we are using well-ordered monomial ordering (see more details in~\cite[Chapter 1.4]{groebnerbasis}).

Then we prove $\mathrm{PNF}_G(f)$ is a valid parametric normal form by verifying the three properties of Definition~\ref{def:param_normal_form}.
The first property is obvious since the loop condition (in Line~\ref{line:loop_cond}) is not met.
Next, in each round, we scale a polynomial $g$ of $G$ and eliminate the leading term of $f$.
Hence, by summing them up, we know $f- \mathrm{PNF}_G(f)$ can be written as a linear combination of polynomials of $G$, where coefficients come from $\mathcal{L}[V]$.
Thus, $r = f - \mathrm{PNF}_G(f)$ belongs to $\myangle{G}_{\mathcal{L}[V]}$.
Finally, we show the second condition is also satisfied by contradiction.
If it is not satisfied, there exits $g\in G$, such that $\lt(\mathrm{PNF}_G(f)) \in \myangle{\lt(g)}_{\mathcal{L}[V]}$.
Hence, $\lt(\mathrm{PNF}_G(f))$ could be written in the form of $h\cdot \lt(g)$, where $h\in \mathcal{L}[V]$.
Thus, $\lm(g)$ divides $\lm(f)$ and $\nu_2(\lc(g)) \le \nu_2(\lc(f))$, which meets the loop condition (in Line~\ref{line:loop_cond}) and causes a contradiction.
\end{proof}

\subsection{Proof of Proposition~\ref{prop:soundness_of_parametric_reduction}}
\label{appendix:proof_of_relation_of_param_nf_nf}
\begin{proof}
According to the third property of Definition~\ref{def:param_normal_form}, $r= (\eta' - \mu\cdot \eta) - \mathrm{PNF}(\eta' - \mu\cdot \eta\ |\ G)$ belongs to $\myangle{G: \mathbb{Z}_{2^d}[V]}_{\mathcal{L}[V]}$.
Without loss of generality, assume $r = h_1\cdot g_1 + \ldots + h_k \cdot g_k$, where $h_i \in \mathcal{L}[V], g_i \in G$.
By substituting the assignment of $\bm{\lambda}$ into $h_i$, we can find that $r$ belongs to $\myangle{G}$ since $h_i$ currently turns an element of $\mathbb{Z}_{2^n}[V]$.
Since $ \mathrm{PNF}(\eta' - \mu\cdot \eta\ |\ G) = 0$ under this assignment, we could find $\eta' - \mu\cdot \eta$ belongs to $\myangle{G}$.
Thus, we have $\eta' - \mu\cdot \eta \twoheadrightarrow_{GB(F_\rho)}^*0$ under that assignment, which implies $\mathrm{NF}(\eta' - \mu\cdot \eta\ |\ G) = 0$.
\end{proof}

\subsection{Description of Methods for Finding Solutions.}
\label{appendix:description_of_methods_for_find_sol}
Formally, let $\mathcal{C} = \{c_i\}_{i=1}^{\abs{\mathcal{C}}}$ be the set of coefficients of $\mathrm{PNF}_{GB(F_\rho)}(\eta' - \mu\cdot \eta)$. 
We first construct a system of congruences: $c_i = 0\ (\mathrm{mod}\ 2^d)$ for $i=1, \ldots \abs{\mathcal{C}}$.
Next, we first utilize Smith normal form~\cite{butson1955systems} to find a specific solution $\bm{\lambda_0}$ of this system.
Afterward, we introduce a set of fresh variables $\{v_i\}_{i=1}^{\abs{\mathcal{C}}}$ and convert the original system of linear congruences to $c_i - v_i\cdot 2^d = 0$ for $i=1, \ldots \abs{\mathcal{C}}$.
Let $\mathbf{A}$ be the coefficient matrix of $(\{\lambda_q\}_{q\in M^{(k)}[V]}, \{v_i\}_{1\le i\le \abs{\mathcal{C}}})$.

Then we compute the nullspace of $\mathbf{A}$ over $\mathbb{R}$. 
Denote it by $\mathrm{Null}(\mathbf{A}) = \{(\bar{\bm{\lambda}}, \bar{\bm{v}})\ :\ \mathbf{A}\cdot (\bm{\lambda},\bm{v})^{\mathrm{T}} = \mathbf{0}\}$, where $\bar{\bm{\lambda}}$  and $\bar{\bm{v}}$ are assignments of $\bm{\lambda}$ and $\bm{v}$, represtively.
Since $\mathbf{A}$ is a matrix over integer domain $\mathbb{Z}$, the elements of each vector belonging to $\mathrm{Null}(\mathbf{A})$ are rational numbers. 
For each $\bm{s}= (\bar{\bm{\lambda}}, \bar{\bm{v}})\in \mathrm{Null}(\mathbf{A})$, let $\mathcal{D}(\bm{s})$ denote the set of all the denominators of $\bm{s}$.
Let $\mathcal{S}^\mu$ be defined as follows.
$$
\mathcal{S}^\mu \triangleq \{\bm{\lambda}_0 + \bar{\bm{\lambda }}\cdot \mathrm{lcm}(\mathcal{D}(\bm{s})):\bm{s} = (\bar{\bm{\lambda}}, \bar{\bm{v}})\in \mathrm{Null}(\mathbf{A})\}.
$$
We could verify each vector $\bm{s}\in \mathcal{S}^\mu$ is a solution to $c_i - v_i\cdot 2^d = 0$ for any $c_i \in \mathcal{C}$.
By substituting the elements of $\bm{s}$ into $\{\lambda_q\}_{q\in M_k[V]}$, we are able to get a set of concrete loop invariants. 

\subsection{Formal Description of Step B4}
\label{appendix:formal_desc_of_step_b4}
Formally, we first construct a set of fresh variables $V_0 = \{x_{1,0},\ldots, x_{n,0}\}$ representing the initial values of all variables.
Then, we construct a conjunction of assertion $\mathcal{A}_1$ by substituting $x_i$ by $x_{i,0}$ in the initial condition.

Next, regarding the values of $\mu$ we select, we are going to construct $\mathcal{A}_2^\mu, \mathcal{A}_3^\mu$ according to $\mathcal{S}^\mu$ as follows.
If $\mu = 1$, for each solution $\bm{s} = (\bar{\bm \lambda}, \bar{\bm v})\in \mathcal{S}^1$, we create a fresh bit-vector variable $c_s$ and construct two constraints $\eta(V_0, k) + c_s = 0$ and $\eta(V, k)  + c_s = 0$ by instantiating $\{\lambda_q\}$ by $\bar{\bm \lambda}$ and replacing $\xi$ by $0$.
Then we append the two constraints to $\mathcal{A}_2^1$ and $\mathcal{A}_2^3$ respectively.
Finally, we check whether $\mathcal{A}_1^1\wedge \mathcal{A}_2^1 \wedge \mathcal{A}_3^1 \Rightarrow \kappa(V)$ is satisfiable.

Otherwise, if $\mu \neq 1$, for each $\bm{s} = (\bar{\bm \lambda}, \bar{\bm v})\in \mathcal{S}^\mu$, we construct two constraints $\eta(V_0, k) = 0$ and $\eta(V, k) = 0$ by instantiating $\bm{\lambda}$ by $\bar{\bm \lambda}$, and append them to $\mathcal{A}_2^\mu$ and $\mathcal{A}_3^\mu$, respectively. 
Then we first check whether the calculated invariant holds for the initial conditions by verifying $\theta(V_0)\Rightarrow \eta(V_0, k)$.
If yes, then we further check whether $\eta(V)\Rightarrow \kappa(V)$ holds.

For example, when the initial condition of Example~\ref{example:smt_solving_for_inv} is modified to $(0 < x < 10) \wedge (0 < y < 10)$, we are able to get the solution set $\mathcal{S}^1 = \{(\lambda_x:1, \lambda_y:-1)\}$ by setting $\mu = 1$ and $k =1$.
Then according to the above method, we can verify the postcondition  $\kappa(V) = (x_2 - x_1 < 10)$ through determining the satisfiablity of the below formula.

\begin{minipage}{0.65 \textwidth}
\begin{gather}
((0 < x_0 < 10) \wedge (0 < y_0 < 10) \tag{$\mathcal{A}_1$} \\ 
\wedge (x_0 - y_0 + c_s =0 )  \tag{$\mathcal{A}_2^{1}$} \\ 
\wedge(x - y + c_s = 0) \tag{$\mathcal{A}_3^{1}$})    
\end{gather}
\end{minipage}
\begin{minipage}{0.3\textwidth}
$$
\Rightarrow \quad (y - x< 10)\,.
$$
\end{minipage}

\section{Experimental Results of Loop Invariant Generation}
\label{appendix:exper_results_of_loop_inv}         

\begin{table}[ht]
\centering
\caption{Performance of \textsf{Eldarica} and our method in 2016.Sygus-Comp~\cite{alur2016sygus}.}
\label{tab:inv_on_2016_sygus_comp}
\begin{tabular}{|l|l|rr|rrrr|}
\hline
\multirow{2}{*}{\textbf{2016.Sygus-Comp~\cite{alur2016sygus}}} & \multirow{2}{*}{\bf Inv} & \multicolumn{2}{c|}{\bf Eldarica~\cite{8603013}}      & \multicolumn{4}{c|}{\bf Our Method $(\S$\ref{sec:loop_invariant})}                                                                   \\ \cline{3-8}
&   & \multicolumn{1}{c}{Time (s) } & Mem (MB) & \multicolumn{1}{c}{Time} & \multicolumn{1}{l}{Speedup} & \multicolumn{1}{l}{Mem} & Save up \\ \hline
\texttt{anfp-new} &                   \textsf{Poly}& $> 200$ & $>10^3$     & 0.5 & 62.5X & 71.0 & 5.4X         \\ 
\texttt{anfp} &                       \textsf{Poly}& $>200$  & $> 10^3$    & 0.3 & 107.1X& 70.2 & 4.9X         \\ 
\texttt{cegar1\_vars-new} & \textsf{Lin}           & 25.3    &  273.5      & 1.6 & 15.9X & 70.5 & 3.9X         \\ 
\texttt{cegar1\_vars} & \textsf{Lin}               & 26.8    &  289.2      & 1.5 & 18.3X & 70.4 & 4.1X         \\ 
\texttt{cegar1-new} & \textsf{Lin}                 & 29.4    &  279.2      & 1.0 & 30.0X & 70.3 & 4.0X         \\ 
\texttt{cegar1} & \textsf{Lin}                     & 24.4    &  266.5      & 1.0 & 24.7X & 70.2 & 3.8X         \\ 
\texttt{cgmmp-new} & \textsf{Lin}                  & 10.9    &  206.8      & 1.0 & 11.1X & 70.3 & 2.9X         \\ 
\texttt{cgmmp} & \textsf{Lin}                      & 11.1    &  212.5      & 0.9 & 12.4X & 73.2 & 2.9X         \\
\texttt{ex23\_vars} & \textsf{Lin}                 & 12.2    &  232.1      & 1.7 & 7.2X  & 70.7 & 3.3X         \\
\texttt{ex14\_simpl} & \textsf{Lin}                & 9.8     &  209.2      & 1.2 & 8.3X  & 70.6 & 3.0X         \\
\texttt{ex14\_vars} & \textsf{Lin}                 & 9.7     &  210.5      & 1.9 & 5.1X  & 70.8 & 3.0X         \\
\texttt{ex14-new} & \textsf{Lin}                   & 6.0     &  181.1      & 1.0 & 6.1X  & 70.3 & 2.6X         \\
\texttt{ex14} & \textsf{Lin}                       & 6.4     &  192.2      & 0.9 & 7.3X  & 70.5 & 2.7X         \\
\texttt{ex23} & \textsf{Lin}                       & 11.5    &  236.3      & 1.4 & 8.3X  & 70.6 & 3.3X         \\
\texttt{fig1\_vars-new} & \textsf{Poly}            & 5.9     &  195.7      &14.9 & 0.4X  & 71.8 & 2.7X         \\
\texttt{fig1\_vars} & \textsf{Poly}                & 5.9     &  199.4      &13.9 & 0.4X  & 71.7 & 2.8X         \\
\texttt{fig1-new} & \textsf{Poly}                  & 5.6     &  182.4      & 1.6 & 3.6X  & 70.6 & 2.6X         \\
\texttt{fig1} & \textsf{Poly}                      & 5.5     &  182.8      & 1.7 & 3.3X  & 70.9 & 2.6X         \\
\texttt{fig9\_vars} & \textsf{Lin}                 & 3.2     &  198.6      & 1.9 & 1.7X  & 70.2 & 2.8X       \\
\texttt{fig9} & \textsf{Lin}                       & 5.7     &  181.0      & 0.9 & 6.5X  & 60.9 & 3.0X        \\
\texttt{sum1\_vars} & \textsf{Lin}                 & 21.5    &  261.7      & 1.4 & 15.6X & 74.2 & 3.5X        \\
\texttt{sum1} & \textsf{Lin}                       & 20.0    &  258.4      & 1.4 & 14.3X & 74.0 & 3.5X        \\
\texttt{sum3\_vars} & \textsf{Lin}                 & 7.0     &  200.0      & 1.4 & 5.1X  & 74.3 & 2.7X         \\
\texttt{sum3} & \textsf{Lin}                       & 6.9     &  190.7      & 1.5 & 4.6X  & 74.3 & 2.6X        \\
\texttt{sum4\_simp} & \textsf{Lin}                 & 20.0    &  272.2      & 1.5 & 13.5X & 74.2 & 3.7X         \\
\texttt{sum4\_vars} & \textsf{Lin}                 & 22.0    &  256.5      & 1.3 & 17.2X & 73.9 & 3.5X        \\
\texttt{sum4} & \textsf{Lin}                       & 7.7     &  194.5      & 1.3 & 6.0X  & 73.9 & 2.6X        \\
\texttt{tacas\_vars} & \textsf{Lin}                & 22.1    &  272.4      & 1.6 & 14.0X & 74.6 & 3.7X         \\
\texttt{tacas} & \textsf{Lin}                      & 24.7    &  309.0      & 1.5 & 16.7X & 74.1 & 4.2X         \\
\hline
\textbf{$29$   in total}        & & & & & & &                                \\
\hline
\end{tabular}
\end{table}

\begin{table}[ht]
\centering
\subfloat[2018.SV-Comp~\cite{beyer2017software}]{
\begin{tabular}{|l|l|rr|rrrr|}
\hline
\multirow{2}{*}{\textbf{2018.SV-Comp~\cite{beyer2017software}}} & \multirow{2}{*}{\bf Inv} & \multicolumn{2}{c|}{\bf Eldarica~\cite{8603013}}      & \multicolumn{4}{c|}{\bf Our Method $(\S$\ref{sec:loop_invariant})}                                                                   \\ \cline{3-8}
&   & \multicolumn{1}{c}{Time (s) } & Mem (MB) & \multicolumn{1}{c}{Time} & \multicolumn{1}{l}{Speedup} & \multicolumn{1}{l}{Mem} & Save up \\ \hline
\texttt{cggmp2005\_true$\ldots$} & \textsf{Lin} & 11.9&   204.4    &  0.88 & 13.5X & 72.2 &  2.8X        \\ 
\texttt{cggmp2005\_variant$\ldots$} & \textsf{Lin}     & 28.4    &  349.9     &  0.98 & 30.6X & 72.3 & 4.8X         \\ 
\texttt{const-false-unreach-call1$\ldots$} & \textsf{Lin}     & 8.0    &  190.0     &  0.78 & 10.3X & 71.8 & 2.6X  \\ 
\texttt{const-true-unreach-call1$\ldots$} & \textsf{Lin}     & 11.2    &  206.1    &  0.78 & 14.3X & 71.8 & 2.9X  \\ 
\texttt{count\_up\_down\_$\ldots$} & \textsf{Lin}   & 6.8    &  196.2    &  0.98 & 6.9X & 72.2 & 2.7X  \\ 
\texttt{css2003\_true-unreach$\ldots$} & \textsf{Lin}   & 6.8    &  196.2    &  0.98 & 6.9X & 72.2 & 2.7X  \\ 
\texttt{down\_true-$\ldots$} & \textsf{Lin}   & 9.4   &  204.3    &  1.2 & 8.0X & 72.5 & 2.8X  \\ 
\texttt{gsv2008\_true-unreach$\ldots$} & \textsf{Poly}   & 5.1  &  179.7    &  0.5 & 10.5X & 70.6 & 2.5X  \\ 
\texttt{hhk2008\_true-unreach$\ldots$} & \textsf{Lin}   & 7.5  &  192.4   &  1.2 & 6.4X & 72.3 & 2.7X  \\ 
\texttt{jm2006\_true-unreach$\ldots$} & \textsf{Lin}   & 16.6 &  222.6   &  1.19 & 14.0X & 72.2 & 3.1X  \\
\texttt{jm2006\_variant\_true$\ldots$} & \textsf{Lin}   & 24.1 & 528.8 &  1.38 & 17.5X & 72.4 & 7.3X  \\
\texttt{multivar\_false-$\ldots$} & \textsf{Lin}   &  9.5 &  198.7  &  0.98 & 9.7X & 72.1 & 2.8X  \\
\texttt{multivar\_true-$\ldots$} & \textsf{Lin}   &  6.3 &  189.6  &  0.89 & 7.0X & 72.1 & 2.6X  \\
\texttt{simple\_$\ldots$-unreach-call1$\ldots$} & \textsf{Lin}   &  10.8 &  207.6  &  1.2 & 9.1X & 72.2 & 2.9X  \\
\texttt{simple\_$\ldots$-unreach-call2$\ldots$} & \textsf{Lin}   &  65.7 &  532.7  &  1.1 & 28.0X & 72.5 & 7.3X  \\
\texttt{while\_infinite\_loop\_3$\ldots$} & \textsf{Lin}   &  4.9 &  148.4  &  0.8 & 6.2X & 71.4 & 2.1X  \\
\texttt{while\_infinite\_loop\_4$\ldots$} & \textsf{Lin}   &  5.2 &  178.9 &  0.9 & 5.9X & 71.7 & 2.5X  \\
\hline
\textbf{$17$ in total}     & & & & & & &                                            \\
\hline
\end{tabular}
}
\qquad
\subfloat[2018.CHI-InvGame~\cite{10.1145/3173574.3173805}]{
\begin{tabular}{|l|c|rr|rrrr|}
\hline
\multirow{2}{*}{\textbf{2018.CHI\_InvGame~\cite{10.1145/3173574.3173805}}} & \multirow{2}{*}{\bf Inv} & \multicolumn{2}{c|}{\bf Eldarica~\cite{8603013}}      & \multicolumn{4}{c|}{\bf Our Method $(\S$\ref{sec:loop_invariant})}                                                                   \\ \cline{3-8}
&   & \multicolumn{1}{c}{Time (s) } & Mem (MB) & \multicolumn{1}{c}{Time} & \multicolumn{1}{l}{Speedup} & \multicolumn{1}{l}{Mem} & Saveup \\ \hline
\texttt{cube2.desugared} & \textsf{Poly}                      & $95.6$ &   $504.9$     & 19.3 & 5.0X & 75.0&  $6.7$X        \\ 
\texttt{gauss\_sum-more-rows.auto} & \textsf{Poly}           & $>200$ &   $>520.5$    & 0.78 & $>38.5$X & 70.9 &   $>7.3$X      \\ 
\texttt{s9.desugared} & \textsf{Poly}               & $>200$ &   $>676.9$     & 22.0 &  $>9.1$X & 73.2 &  $>9.2$X     \\ 
\texttt{s10.desugared} & \textsf{Poly}                 & $>200$ &   $> 512.0$     & $0.9$ & $>227.3$X & 71.4 & $>7.2$X         \\ 
\texttt{s11.desugared} & \textsf{Poly}      & $>200$ &$> 502.8$& 1.4 & $>145.9$X & $73.7$&$>6.8$X       \\ 
\texttt{sorin03.desugared} & \textsf{Poly}                  & $>200$    &  $>492.5$     & $1.8$ & $112.4$X & $71.2$ & $6.9$X         \\ 
\texttt{sorin04.desugared} & \textsf{Poly}                      & 75.5   &  417.5     & 19.5 & $3.9$X &73.0 & $5.7$X        \\
\texttt{sqrt-more-rows-swap-columns} & \textsf{Poly}               & $>200$ &   $>427.2$     & 12.2 & $16.4$X & 74.9 &  $>5.7$X       \\ 
\texttt{non-lin-ineq-1.desugared} & \textsf{Poly}              & 6.9    &  243.5     & 2.4 & $7.8$X & $71.0$ & $3.4$X         \\
\texttt{non-lin-ineq-3.desugared} & \textsf{Poly}               & $>200$ &   $>531.7$     & 2.0 & $>101.0$X & 71.9 &  $>7.4$X        \\ 
\texttt{non-lin-ineq-4.desugared} & \textsf{Poly}                    & $>200$ &   $> 525.2$     & 5.1 & $>39.3$X & 73.5 & $>7.1$X       \\ 
\texttt{s5auto.desugared} & \textsf{Poly}                     & $>200$ &   $>530.0$     & 0.9 & $>34.1$X & $70.8$ & $7.5$X        \\ 
\hline
\textbf{$12$ in total}                                         &  &     &        &  & & &   \\
\hline
\end{tabular}
}
\caption{Performance of \textsf{Eldarica} and our method on 2018.SV-Comp~\cite{beyer2017software} and 2018.CHI-InvGame~\cite{10.1145/3173574.3173805}.}
\label{tab:inv_on_2018_SV_comp}
\end{table}

\end{document}